\def\R{\mathbb{R}}
\def\N{\mathbb{N}}
\theoremstyle{plain}
\newtheorem{theorem}{Theorem}
\newtheorem{Prop}{Proposition}
\theoremstyle{remark}
\begin{document}

\title{\textbf{Chemical Oscillations out of Chemical Noise}}
\author{Carlos Escudero\\
\small \textsl{Departamento de Econom\'{\i}a Cuantitativa} \\
\small \textsl{\& Instituto de Ciencias Matem\'aticas (CSIC-UAM-UC3M-UCM)}\\
\small \textsl{Universidad Aut\'{o}noma de
Madrid, 28049 Madrid, Spain}\\
\small \textit{e-mail: cel@icmat.es}\\
Andr\'es M. Rivera\\
\small \textsl{Departamento de Ciencias Naturales y Matem\'aticas, Facultad de Ingenier\'ia}\\
\small \textsl{Pontificia Universidad Javeriana Cali, 26239 Cali, Colombia}\\
\small \textit{e-mail: amrivera@puj.edu.co}\\
Pedro J. Torres\\
\small \textsl{Departamento de Matem\'atica Aplicada, Facultad de Ciencias }\\
\small \textsl{Universidad de Granada, 18071 Granada, Spain}\\
\small \textit{e-mail: ptorres@ugr.es}}
\date{}
\maketitle

\begin{abstract}
The dynamics of one species chemical kinetics is studied. Chemical
reactions are modelled by means of continuous time Markov
processes whose probability distribution obeys a suitable master
equation. A large deviation theory is formally introduced, which
allows developing a Hamiltonian dynamical system able to describe
the system dynamics. Using this technique we are able to show that
the intrinsic fluctuations, originated in the discrete character
of the reagents, may sustain oscillations and chaotic trajectories
which are impossible when these fluctuations are disregarded. An
important point is that oscillations and chaos appear in systems
whose mean-field dynamics has too low a dimensionality for showing
such a behavior. In this sense these phenomena are purely induced
by noise, which does not limit itself to shifting a bifurcation
threshold. On the other hand, they are large deviations of a short
transient nature which typically only appear after long waiting
times. We also discuss the implications of our results in
understanding extinction events in population dynamics models
expressed by means of stoichiometric relations.
\end{abstract}

\maketitle

\section{Introduction}

The kinetics of reaction systems have been extensively studied
over the years. These systems have been postulated as paradigmatic
models for the description of a large number of natural phenomena,
including topics from organic and inorganic chemistry,
epidemiology, population biology and gene\-tics, nuclear physics,
non-equilibrium statistical mechanics and many others
sciences~\cite{kampen,gardiner}. The mathematical description of
such systems usually starts with the assumption of a set of
stoichiometric relations of the form
\begin{equation}
A \stackrel{\alpha}\longrightarrow B,
\end{equation}
signifying that the reagent $A$ transforms to $B$ with the time
dependent proba\-bility
\begin{equation}
P_{A \to B}(t)=\alpha e^{-\alpha t},
\end{equation}
where $\alpha >0$ is the specific reaction rate. Note that the
probabilistic nature of the reactions introduces fluctuations into
the dynamics: this is the ``chemical noise'' we will be interested
in. In more general terms, the state of a system containing $m$
reagents and $n$ reactions is described by a continuous time
Markov process. All the available information is carried by the
distribution $P(n_A,n_B,\cdots;t)$ specifying the probability of
the existence of exactly $n_A$ reagents of type $A$, $n_B$ of type
$B$, ..., at time $t$. The dynamics of this distribution is
governed by a master equation of the form~\cite{kampen,gardiner}
\begin{equation}
\frac{dP_k}{dt}= \sum_{j \neq k} \left[ W_{j \to k} P_j - W_{k \to
j} P_k \right],
\end{equation}
where $P_k$ denotes the probability of finding the system in the
state $k$ charac\-terized by a certain number of reagents of each
type, and $W_{j \to k}$ is the transition matrix from state $j$ to
state $k$. While solving the master equation to find $P_k$ would
mean that we possess all the available information on the system,
the chances of obtaining an exact closed form for $P_k$ are scarce
in realistic situations. Additionally, the amount of information
it brings is usually excessive, and a great part of it does not
add any valuable information about the dynamics. Consequently, the
most common approach to the subject concentrates on the dynamics
of some statistical quantity of interest, as for instance a
density, which is able to describe the system macroscopic state.
The selection of an adequate variable has to be supplemented with
selecting a suitable approximation in order to get an operative
theory that allows studying the otherwise commonly untractable
master equation. A particularly advantageous choice is the analog
of the quantum mechanical Wentzel-Kramers-Brillouin (WKB)
approximation adapted to this sort of systems, which is now well
established in both physical and mathematical literatures, see for
instance~\cite{kubo,schuss1,schuss2,schuss3,dykman1,dykman2,elgart1,
doering1,elgart2,doering2,assaf1,assaf2,escudero}. It allows the
description of both the short time dynamics, which is to a large
extent independent of the fluctuations and therefore captured by
mean-field type approximations, and the long time behavior which
is affected, dramatically on occasion, by large deviations. The
mathematical and physical natures of these large deviations will
appear clearer in the following sections.

In this work we are concerned with simple reaction sets which on
the other hand have an intuitive physical meaning. This way we
will focus on simplified model systems which, although not of
direct practical applicability, facilitate analytic progress and
physical intuition. We will show how chemical fluctuations
strongly affect the dynamics for long times, when large deviations
have had time to develop. In these cases, chemical fluctuations
are able to promote periodic orbits and chaotic behavior in
reaction systems whose dimensionality is too low to present such a
behavior if strictly deterministic dynamics are considered. These
effects are, however, both rare and short-lived. They are, at the
same time, substantially different from other sorts of
noise-induced oscillations which appear in different important
phenomena and whose structure relies on an existing deterministic
mechanism (like the proximity to a deterministic bifurcation)
which is anticipated or activated by
noise~\cite{TS1985,POC1999,Lindner-etal2004,Liebermeister2005,DR2006}.
In this sense, we may say our focus is on oscillations which are
purely promoted by chemical noise. Our approach will be
probabilistic at the beginning, when we will present formal
calculations in which the equations governing large fluctuations
will be derived. These equations have the form of Hamiltonian
dynamical systems, which will be in realistic situations genuinely
different from the ones usually considered in classical mechanics.
For them we will be able to show, by means of explicit
calculations, rigorous proofs and numerical simulations, how
chemical noise is capable of sustaining chemical oscillations,
both of periodic and chaotic nature.

\section{Large deviations}

\subsection{Brownian motion}

We devote this section to clarify the type of problems we are
going to solve. We start with the perhaps simplest stochastic
process one could consider: one-dimensional Brownian motion. For
our current purposes it will be the solution of the equation
\begin{equation}
\label{brownian} \frac{dB}{dt}=D \, \xi(t),
\end{equation}
where $D>0$ is a diffusion constant and $\xi(t)$ is the standard
Gaussian white noise defined by its two first moments
\begin{equation}
\mathbb{E}[\xi(t)]=0, \qquad \mathbb{E}[\xi(t) \xi(t')]=
\delta(t-t'),
\end{equation}
where $\delta(\cdot)$ is the Dirac delta distribution. Of course,
a rigorous interpretation of this equation is possible in terms of
It\^o calculus~\cite{oksendal}, but such a precise definition will
not be needed herein. Equation~(\ref{brownian}) is provided with
the initial condition $B(0)=0$. A classical problem within this
subject is calculating the first time the random walker $B(t)$
reaches some fixed level $a \neq 0$ in absence of other
constraints. The well--known solution states that the random
walker reaches level $a$ in finite time with probability one, but
the mean time at which this event occurs diverges.

Langevin equations like~(\ref{brownian}) and more complicated
variants of it are well understood from the large deviations point
of view~\cite{freidlin}. It associates to this stochastic
differential equation the rate or action functional
\begin{equation}
S[x(t)]= \frac{1}{2D} \int |\dot{x}(t')|^2 dt',
\end{equation}
which in the small noise limit $D \to 0$ gives rise to the
following Euler-Lagrange equation
\begin{equation}
\ddot{x}=0,
\end{equation}
for the position $x$ of the random walker. If we complement this
equation with the boundary conditions $x(0)=0$ and $x(T)=a$ we
find the solution
\begin{equation}
\label{largedeviation}
x(t)=\frac{a}{T}t,
\end{equation}
signaling the most probable trajectory that links the origin with
the level $a$ after a time $T$ for the Brownian
dynamics~(\ref{brownian}). Small deviations are obtained by
setting $D=0$ in~(\ref{brownian}). In this case the random walker
stays at the origin for all times. So the full picture would be,
for small noise, the random walker will be at a neighborhood of
the origin with a large probability but with a small probability
large deviations might appear and drive the system further away.
The probability $\mathcal{P}$ with which these large deviations,
which promote trajectories~(\ref{largedeviation}), manifest
themselves into the system dynamics is proportional to the
exponential of the negative of the action
\begin{equation}
\label{gaussian} \mathcal{P} \sim e^{-S} = \exp \left(-
\frac{a^2}{2DT} \right).
\end{equation}
From this formula it is clear that those trajectories that connect
the origin with the level $a$ in a shorter time are rarer than
those which take a longer time. The prefactor in this case is
easily found by normalization. Note that the large deviation
theory is valid for $a^2 \gg DT$, otherwise the system diffuses
away from the original position and the approximation breaks down.

\subsection{Plankton extinction}

We will describe now the large deviations technique in the context
of reaction kinetics. To this end we consider a simple model that
has nevertheless a genuine practical interest. This model was
introduced as a description of plankton population
dynamics~\cite{zhang,adler,young} and nonequilibrium statistical
mechanics~\cite{henkel}. It consists of the following reactions
\begin{equation}
A \stackrel{\gamma}\longrightarrow 2A, \qquad A
\stackrel{\gamma}\longrightarrow \emptyset,
\end{equation}
happening at the same rate $\gamma$. We will employ large
deviation theory to describe the extinction of the ``plankton
population'' $A$. The state of the system may be described by a
continuous time Markov process obeying the master equation
\begin{equation}
\frac{d P_n}{dt}=\gamma [(n-1)P_{n-1}-nP_n] + \gamma
[(n+1)P_{n+1}-nP_n],
\end{equation}
where the term inside the first bracket corresponds to the
branching reaction and the one inside the second bracket to the
disintegration reaction. By introducing the generating function
\begin{equation}
G(p,t)= \sum_{n=0}^\infty p^n P_n(t),
\end{equation}
we transform the master equation into the following ``imaginary
time Schr\"{o}dinger equation''
\begin{equation}
\partial_t G = \gamma (p-1)^2 \partial_p G.
\end{equation}
Note that we are employing the ``momentum'' rather than the
``coordinate'' representation in this last equation. One can
obtain the probability distribution from the generating function
in the following way
\begin{equation}
\label{contint}
P_n(t)= \frac{1}{2 \pi i} \oint G(p,t) \, p^{-n} \, \frac{dp}{p},
\end{equation}
where the contour integral runs over a closed path on the complex
$p-$plane, surrounding the origin and going through the region of
analyticity of $G(p,t)$. The corresponding ``classical''
Hamiltonian of our theory reads
\begin{equation}
\label{classicalhamiltonian}
\mathcal{H}=\gamma (p-1)^2 q;
\end{equation}
it is precisely this Hamiltonian, as in the previous case, which
describes the large deviations of the system. The corresponding
equations of motion are
\begin{eqnarray}
\label{dotq}
\dot{q} &=& \frac{\partial \mathcal{H}}{\partial p}= 2 \gamma
(p-1) q,
\\
\dot{p} &=& -\frac{\partial \mathcal{H}}{\partial q}= -\gamma
(p-1)^2.
\label{dotp}
\end{eqnarray}
Note that the line $p=1$ is degenerate and all points on it are
fixed points. These solutions refer to small deviations: the
system stays with a large probability in a neighborhood of the
initial condition for short times as in the Brownian motion case.
The solution for the coordinate $q$ is
\begin{equation}
\label{solucionq}
q(t)= \left( \sqrt{q(0)} \pm \sqrt{\gamma H}t
\right)^2,
\end{equation}
where the minus sign is selected for extinction trajectories and
$H$ is a constant indicating the initial ``energy''. The duplicity
of signs in this equation comes from the extraction of the square
root of equation~(\ref{classicalhamiltonian}). The number $n$ of
reagents can be calculated by means of formally applying a
steepest descent approximation to formula~(\ref{contint}) for
$G(p,t)=\exp[-S(p,t)]$, where $S$ is the ``classical''
action~\cite{elgart1}. Then one finds $n \approx -p \partial_p S$,
and employing the ``classical'' relation $q = -\partial_p S$ one
concludes $n(t) \approx p(t) q(t)$. In our particular example one
finds
\begin{equation} \label{numero}
n \approx pq= q -\sqrt{\frac{Hq}{\gamma}},
\end{equation}
which becomes zero due to a fluctuation when $p \to 0$, which
leads to $q=H/\gamma$. If the system follows an optimal trajectory
it will become extinct after a time
\begin{equation}
t_e=\sqrt{\frac{q(0)}{\gamma H}}-\frac{1}{\gamma}.
\end{equation}
The probability with which this realization appears for short
times is the exponentiated negative of the action $\mathcal{P}
\sim e^{-S}$, up to some prefactor. We will limit ourselves to the
exponential order, as the calculation of the prefactor is a rather
technical issue~\cite{escudero} and will not add substantial
information to the present discussion. The action reads
\begin{equation}
S= \int_0^{t_e} (p\dot{q}-H) dt + [p(0)q(0)-p(t_e)q(t_e)]
+S_0=S_0,
\end{equation}
in the case of an extinction trajectory, where the initial action
$S_0=- \ln [G(p,0)]$. The last equality has been derived employing
the following derivations
\begin{eqnarray}
\int_0^{t_e} (p\dot{q}-H) dt &=& \int_0^{t_e} [p\dot{q}- \gamma
(p-1)^2 q] dt = \int_0^{t_e} \gamma q (p^2-1) dt, \\
\frac{d}{dt} (pq) &=& \gamma q (p^2-1),
\end{eqnarray}
where we have substituted $\dot{p}$ and $\dot{q}$ for their
respective values from (\ref{dotq}) and (\ref{dotp}). We consider
two initial conditions as in \cite{elgart1}, the Poisson
distributed initial condition with average $n_0$, this is
$G(p,0)=\exp [n_0(p-1)]$, and the Kronecker delta centered at
$n_0$, which is $G(p,0)=p^{n_0}$. In the first case the extinction
probability reads
\begin{equation}
\mathcal{P} \sim \exp \left[ -\sqrt{\frac{H^2}{4 \gamma^2} +
\frac{H n_0}{\gamma}}+\frac{H}{2 \gamma} \right],
\end{equation}
and in the second
\begin{equation}
\mathcal{P} \sim \left[ 1+ \frac{H}{2 n_0 \gamma}
-\sqrt{\frac{H^2}{4 n_0^2 \gamma^2} + \frac{H}{n_0 \gamma}}
\right]^{n_0},
\end{equation}
and both yield the same result in the thermodynamic limit $n_0 \to
\infty$
\begin{equation}
\label{ctime} \mathcal{P} \sim \exp \left( -\sqrt{\frac{H
n_0}{\gamma}} \right).
\end{equation}
The optimal trajectory corresponding to this characteristic time
to extinction is found by combining Eq. (\ref{solucionq}) (with
the minus sign) and the expression for $n$ given by the second
equality of Eq. (\ref{numero})
\begin{equation}
\label{optimal} n(t)= n_0 + \gamma H t^2 -t \sqrt{H^2 + 4 n_0
\gamma H} \approx \left( \sqrt{n_0} - \sqrt{\gamma H} \, t
\right)^2,
\end{equation}
where the thermodynamic limit $n_0 \to \infty$ has been considered
in the last step. In the derivation of the first equality we have
employed the following two relations
\begin{eqnarray}
n(t) &=& \left( n_0 -\sqrt{\gamma H q(0)} \, t \right)
\left(1-\sqrt{\frac{\gamma
H}{q(0)}} \, t \right), \\
\sqrt{q(0)}+ \frac{n_0}{\sqrt{q(0)}} &=& \sqrt{\frac{H}{\gamma}+4
\, n_0}.
\end{eqnarray}
Note that we have found a one parameter family of solutions,
parameterized with the energy $H$. Time $t_e$ is not the mean
extinction time, because at every time there are equally probable
trajectories which do not become extinct, the ones corresponding
to the plus sign in Eq. (\ref{solucionq}). This makes the mean
extinction time infinite, although the system becomes extinct with
probability one~\cite{gardiner,escudero2}. The interpretation of
this time is that of a characteristic time of extinction, this is,
if the system becomes extinct after a time $t_e$ then the most
probable path to extinction would have been (\ref{optimal}). Note
that more ``energetic'' trajectories lead to extinction faster but
they are less probable. Using the relation between $t_e$ and $H$
we may find an expression akin to~(\ref{gaussian}):
\begin{equation}
\mathcal{P} \sim \exp \left( -\frac{n_0}{\gamma t_e} \right).
\end{equation}
Note that in this case the scaling is different. Note also that,
as in the previous case, the large deviation theory that has led
us to the optimal trajectories~(\ref{optimal}) is valid for short
times $t \ll n_0/\gamma$.

\section{Chemical oscillations and chaos}

\subsection{Noise induced oscillations}

\begin{figure}[h]
\begin{center}
\includegraphics[scale=0.7]{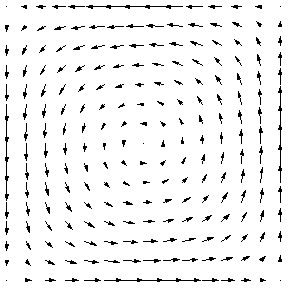}
\caption{Vector field of the Hamiltonian system \eqref{ham}
(detail of the qua\-dran\-gular area enclosed by the four zero
energy lines). The values of the parameters are $\mu=1$ and
$\sigma=2$.} \label{quadarea}
\end{center}
\end{figure}

We now move to studying the more complex nonlinear reversible
reaction
\begin{equation}
\label{reactionset}
A \stackrel{\mu}\longrightarrow 2A, \qquad 2A
\stackrel{\sigma}\longrightarrow A.
\end{equation}
It is clear that it can be considered as a stochastic discrete
model for logistic growth. The master equation describing this
reactions set is
\begin{equation}
\frac{dP_n}{dt}=\mu [(n-1)P_{n-1}-nP_n] + \frac{\sigma}{2}
[(n+1)nP_{n+1}-n(n-1) P_n].
\end{equation}
We may use the generating function technique to convert this
equation into a partial differential equation which can be exactly
mapped, using quantum mechanical tools, into the path integral
\cite{elgart1}
\begin{equation}
U=\int \mathcal{D}p\mathcal{D}qe^{-S[p,q]},
\end{equation}
where $U$ is the problem Green function. The action reads
\begin{equation}
S[p,q]=\int_0^t \left[p\dot{q}
-\mu(p^2-p)q-\frac{\sigma}{2}(p-p^2)q^2 \right]dt +
[p(0)q(0)-p(t)q(t)] + S_0,
\end{equation}
and upon rendering the variables nondimensional $q \to
(2\mu/\sigma)q$ (so this new $q=O(1)$, see below) and $t \to
t/\mu$ one finds
\begin{equation}
S[p,q]=\frac{2\mu}{\sigma} \left\{ \int_0^{t/\mu} \left[p\dot{q}
-(p^2-p)q-(p-p^2)q^2 \right]dt + [p(0)q(0)-p(t)q(t)] + \tilde{S}_0
\right\},
\end{equation}
where $\tilde{S}_0=\sigma S_0/(2 \mu)=O(1)$, so the steepest
descent method makes sense for $\mu \gg \sigma$. In this
approximation and recovering the dimensional coordinates the large
deviations problem reduces to studying the Hamiltonian
\cite{elgart2}
\begin{equation}\label{ham}
\mathcal{H} = \mu(p^2-p)q+\frac{\sigma}{2}(p-p^2)q^2,
\end{equation}
and the corresponding dynamical system
\begin{equation}\label{hamilton}
\left\{
\begin{aligned}
\dot{p} &=& -\frac{\partial \mathcal{H}}{\partial q}
=\mu(p-p^2)+\sigma(p^2-p)q,
\\
\dot{q} &=& \frac{\partial \mathcal{H}}{\partial p} =
\mu(2p-1)q+\frac{\sigma}{2}(1-2p)q^2.
\end{aligned}
\right.
\end{equation}
This system has five fixed points, four of which lie in the energy
$H=0$ level, these are
\begin{equation}
(0,0), \quad (0,2\mu/\sigma), \quad (1,0), \quad \mathrm{and}
\quad (1,2\mu/\sigma),
\end{equation}
all of them are saddles. The fifth fixed point is
$(1/2,\mu/\sigma)$, its energy is $H=-\mu^2/(8\sigma)$, and it is
a local minimum of energy, what implies it is a center. The $H=0$
level is particularly simple, as it is composed of the four
invariant lines
\begin{equation}
\{p=0\}, \quad \{p=1\}, \quad \{q=0\}, \quad \mathrm{and} \quad
\{q=2\mu/\sigma\}.
\end{equation}
The dynamics is exactly integrable in all these lines. They cross
at the four zero energy fixed points, and they enclose a
quadrangular area in whose center lies the fifth fixed point. In
this quadrangular area all the trajectories are periodic orbits
surrounding the center, see Fig.~\ref{quadarea}. Note that, both
outside and inside these four zero energy lines the energy is
strictly negative. The mean-field behavior corresponds to the
$\{p=1\}$ line, on which the dynamics reduces to the well known
logistic equation
\begin{equation}
\dot{q} = \mu q - \frac{\sigma}{2} q^2.
\end{equation}
As expected, reaction~(\ref{reactionset}) corresponds to pure
logistic growth when the fluctuations are neglected, this is, for
a large number of reagents and short times. So at the mean-field
level the only possibility is a monotonic approach to the stable
fixed point $q=2\mu/\sigma$, provided the initial condition
fulfills $q>0$. So the dynamical scenario presents a very reduced
phenomenology in this case. However, if we consider the intrinsic
fluctuations and thus the full phase space things are different.
In this case, for instance, we may observe the system for a short
time in the neighborhood of the fifth fixed point, which
represents a reagent density $n \approx \mu/(2\sigma)$ (this is
obtained as the product $pq$ evaluated at the fixed point). If we
initialize the system with this reagent number we have a
probability
\begin{equation}
\mathcal{P} \sim e^{-S[p,q]}= \exp \left( -\frac{\mu^2
\tau}{8\sigma} - \frac{\mu}{4\sigma} \right), \qquad \mathcal{P}
\sim \exp \left[ -\frac{\mu^2 \tau}{8\sigma} - \frac{\mu
\ln(2)}{2\sigma} \right],
\end{equation}
of observing the system in a neighborhood of this point during a
time $\tau$ respectively for the Poissonian distributed and
deterministic initial condition. But furthermore we can observe
periodic behavior. All the periodic orbits are optimal
trajectories that can be observed experimentally if we wait long
enough. These orbits are characterized by an energy
$0>H_p>-\mu^2/(8\sigma)$, and therefore the probability of
observing a number $m$ of cycles is
\begin{equation}
\label{smallprob}
\mathcal{P} \sim e^{H_p m t_p - m \mathcal{A} - S_0},
\end{equation}
again at exponential order, where $t_p$ is the time it takes to
perform one such cycle and $\mathcal{A}$ is the phase space area
enclosed by one such trajectory. We expect that formulas like this
will be able to express the order of magnitude of the
corresponding probability, not just an exponential dependence, as
we have observed in other cases when the system is not in the
proximity of an absorbing state~\cite{escudero}. So we see that,
while the mean-field description predicts monotonic approach to a
stable fixed point, the stochastic theory allows the appearance of
transient periodic trajectories. Let us emphasize that such
trajectories are not the typical behavior of the solution to the
master equation. They are large deviations, which manifest
themselves only after very long times and are of a short transient
nature. These characteristics are quantitatively described by the
small probability of its occurrence~(\ref{smallprob}). Of course,
periodic orbits in the $(p,q)-$plane are not of physical nature.
But it is on the other hand immediate that the number of reagents
$n(t)=p(t)q(t)$ is periodic if both $p(t)$ and $q(t)$ are
periodic. We have represented the time evolution of $n$ for an
initial condition belonging to one of the periodic orbits in the
$(p,q)-$plane in Fig.~\ref{npqt}.

\begin{figure}[h]
\begin{center}
\includegraphics[scale=0.7]{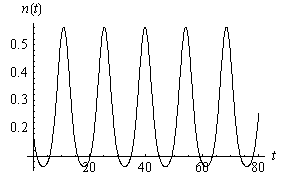}
\caption{Number of reagents $n(t)=p(t)q(t)$ versus time $t$
obtained from numerically integrating dynamical
system~(\ref{hamilton}). The system is initialized with the
conditions $p(0)=1/4$ and $q(0)=3/4$ which correspond to
$n(0)=3/16$. The values of the parameters are $\mu=1$ and
$\sigma=2$.} \label{npqt}
\end{center}
\end{figure}

The fact that Hamiltonians like~(\ref{ham}), which come from a
chemical master equation, are not hermitian translates into the
impossibility of expressing probabilities like~(\ref{smallprob})
in terms of the physical variable $n$ rather than the formal
auxiliary variables $p$ and $q$. Despite this undesirable fact, we
can still characterize periodic orbits like the one represented in
figure~\ref{npqt} by means of its period. Indeed, it is clear that
for periodic solutions, the period of $p(t)$ and $q(t)$ will
uniquely determine the period of $n(t)$. So a way to connect the
physically measurable quantity $n$ with formula~(\ref{smallprob})
is through the period of the oscillations of $n$. Of course,
together with these large deviations, small fluctuations will be
present all of the time. A way of distinguishing both of them is
by means of their amplitude. The amplitude of small fluctuations
is $O\left(\sqrt{\mu/\sigma} \right)$ while the amplitude of these
oscillations promoted by large deviations is $O\left( \mu/\sigma
\right)$. So the difference should certainly be measurable in the
limit $\mu \gg \sigma$, which is exactly the range of validity of
our WKB approximation.

\subsection{Chemical chaos}

Not only periodicity but also chaotic trajectories are possible in
this simple system. To obtain them we allow the branching rate to
be a periodic function of time $\mu \to \mu(t)>0$. Note that, due
to the structure of system Eqs. (\ref{hamilton}), we could let
either $\mu$, $\sigma$ or both be time dependent and still reduce
the system to a $\mu$ time dependent one (while $\sigma$ remains
constant) by means of a change of the temporal variable. In this
case we deal with the system
\begin{eqnarray}
\label{hamilton11}
\dot{p} &=& \mu(t)(p-p^2)+\sigma(p^2-p)q,
\\
\dot{q} &=&
\mu(t)(2p-1)q+\frac{\sigma}{2}(1-2p)q^2. \label{hamilton22}
\end{eqnarray}
Herein we still can identify three invariant lines: $\{p=0\}$,
$\{p=1\}$ and $\{q=0\}$. The mean-field dynamics appears on the
$\{p=1\}$ line, and is expressed by the equation
\begin{equation}
\dot{q} = \mu(t) q - \frac{\sigma}{2} q^2.
\end{equation}
This differential equation is of Bernoulli type and can be solved
to yield
\begin{equation}
q(t)=\frac{ q(0) \exp \left[ \int_0^t \mu(t_1)dt_1
\right]}{1+\frac{\sigma}{2}q(0)\int_0^t \exp \left[ \int_0^{t_1}
\mu(t_2) dt_2 \right]dt_1}.
\end{equation}
Assuming that $\mu(t)=\mu+\epsilon h(t)$, where $h(t)$ is
$T-$periodic and continuous and $\epsilon$ is a small enough
constant, we know that there exists one $T-$periodic solution
$q_s(t)$ which attracts all initial conditions $q(0)>0$. This is
the solution whose initial condition fulfills
\begin{equation}
q(0)=\frac{\exp \left[ \int_0^T \mu(t_1)dt_1
\right]-1}{\frac{\sigma}{2}\int_0^T \exp \left[ \int_0^{t_1}
\mu(t_2) dt_2 \right]dt_1}.
\end{equation}
The points $(0,0)$ and $(1,0)$ continue to be fixed points in the
non-autonomous system, and the point $(0,2\mu/\sigma)$ gives rise
to a periodic orbit on $\{p=0\}$, which is formally identical to
$q_s(t)$, but it is unstable on this line, and we will refer to it
as $q_u(t)$. While the invariant line $\{q=2 \mu/\sigma\}$ is not
present in the perturbed system, the periodic trajectories
$q_s(t)$ and $q_u(t)$ are still connected. They correspond to
fixed points $\bar{q}_s$ and $\bar{q}_u$ of the Poincar\'{e} map
associated to the forced continuous dynamical system. One may
apply the theory developed in \cite{escudero2} to see that for
generic perturbations $h(t)$ the unstable manifold of $\bar{q}_s$
intersects the stable manifold of $\bar{q}_u$ and thus guarantees
the existence of a heteroclinic connection linking both fixed
points.

The irregular behavior of the system dynamics comes from the fact
that the periodic trajectories in the autonomous system may become
quasiperiodic or even chaotic in the perturbed one. This can be
justified by classical arguments from the Kolmogorov-Arnold-Moser
(KAM) theory \cite{A,SM} as follows.

Let us consider the stable equilibrium $P=(1/2,\mu/\sigma)$. The
Floquet multipliers are by definition the eigenvalues of the
corresponding Poincar\'e matrix. By the Hamiltonian structure, the
Floquet multipliers are complex conjugate numbers
$\lambda_1,\lambda_2$ such that $\lambda_1\lambda_2=1$. For $P$, a
direct computation on the linea\-rized system gives
$\lambda_1=e^{\omega iT}$,
$\lambda_2=\overline\lambda_1$, with $\displaystyle{\omega=\mu/2}$. The equilibrium $P$ is said to be
{\em non-degenerate} if $\displaystyle{\omega T\ne k\pi}$, for
$k=1,\ldots,4$. A non-degenerate equilibrium is persistent under
small perturbations as a fixed point of the Poincar\'e map. In
other words, $P$ is continued as a $T$-periodic solution of the
perturbed system (\ref{hamilton11})-(\ref{hamilton22}) for small
values of $\epsilon$. Besides, the presence of the heteroclinic
loop corresponding to the energy level $H=0$ in the unperturbed
system guarantees that the center around $P$ is not isochronous,
that is, the Poincar\'e map is of twist type. Under such
conditions, a generic perturbation gives rise to a classical KAM
scenario, composed by a dense set of invariant tori (corresponding
to quasiperiodic solutions) that are gradually destroyed as the
perturbation increases, giving rise to  domains of chaotic motion
(Smale horseshoes) intermingled with stability islands.

 Figure \ref{caos} shows a chaotic orbit surrounding a set of five stability
islands. Such chaotic orbits arises from the destruction of an invariant torus that
 persists until a critical value of the perturbation parameter $\epsilon$. Figure \ref{caos2}
presents a zoom of the latter picture, where the typical fractal
structure can be appreciated.

\begin{figure}[h]
\begin{center}
\includegraphics[scale=0.6]{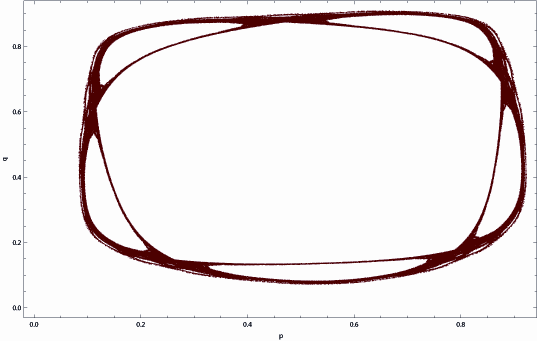}
\caption{Numerical simulation showing chaos for system
\eqref{hamilton11}-\eqref{hamilton22} with
$h(t)=\sin{(2t)},\mu=1,\sigma=2,\epsilon=0.97$. It is drawn the
Poincar\'e section of a single orbit with initial
 values $p(0)=0.53, q(0)=0.91$. More than $10^5$ points have been computed.}
\label{caos}
\end{center}
\end{figure}

\begin{figure}[h]
\begin{center}
\includegraphics[scale=0.6]{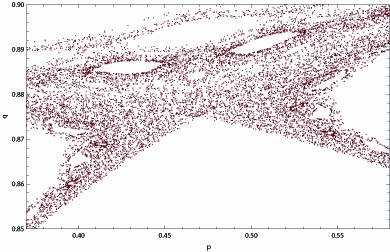}
\caption{Zoom of Fig. \ref{caos}. Besides the big stability
islands, some smaller holes can be appreciated, conforming to a
fractal structure.} \label{caos2}
\end{center}
\end{figure}

Let us mention that the stability islands are centered in periodic orbits of higher periods
(or subharmonic solutions). In the case of Figure \ref{caos},
a subharmonic solution of order 5 is located at the stability islands. Section 5 is devoted to the study
of the existence of such subharmonic solutions.

\begin{figure}
\centering \subfigure[]{
\includegraphics[width=0.4\textwidth]{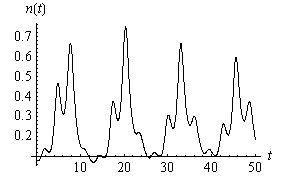}
\label{chaosn11}} \subfigure[]{
\includegraphics[width=0.4\textwidth]{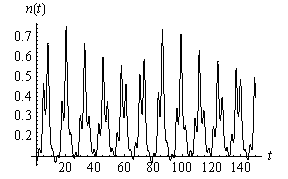}
\label{chaosn1}} \subfigure[]{
\includegraphics[width=0.4\textwidth]{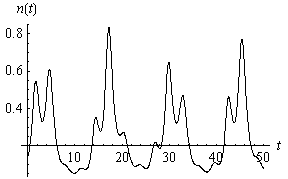}
\label{chaosn22}} \subfigure[]{
\includegraphics[width=0.4\textwidth]{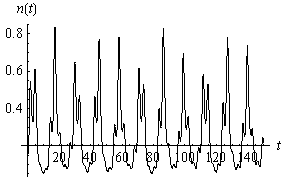}
\label{chaosn2} } \caption{Number of reagents $n(t)=p(t)q(t)$
versus time $t$ obtained from numerically integrating dynamical
system~(\ref{hamilton11})-(\ref{hamilton22}). The values of the
parameters are $\mu=1$, $\sigma=2$ and $\epsilon=0.97$; the
function $h(t)=\sin{(2t)}$. Panels \subref{chaosn11} and
\subref{chaosn1}: The system is initialized with the conditions
$p(0)=1/5$ and $q(0)=2/5$ which correspond to $n(0)=2/25$. Panels
\subref{chaosn22} and \subref{chaosn2}: The system is initialized
with the conditions $p(0)=3/5$ and $q(0)=1/4$ which correspond to
$n(0)=3/20$.} \label{chaoticn}
\end{figure}

We finish this section saying that we can compute the probability
with which a chaotic orbit appears by means of the exponentiated
negative of the action. This has already been done in explicitly
time dependent chemical systems for the simpler extinction
trajectories~\cite{meerson}. Herein we have shown that for
exponentially long times the sort of Hamiltonian chaos we have
described is possible for the simple reaction~(\ref{reactionset})
by virtue of chemical noise. In absence of noise only periodic
trajectories are possible. We have represented in
figure~\ref{chaoticn} the resulting aperiodic trajectories for the
number of reagents $n(t)=p(t)q(t)$ for two different time slots
and initial conditions. Their physical interpretation is analogous
to that of the periodic case in the last section. It is
interesting to note that some of the irregular motions that can be
observed in certain stochastic reaction dynamics might have an
underlying deterministic structure (we are always referring to
large deviations); let us recall that the time evolutions
represented in figure~\ref{chaoticn} are purely deterministic.

\section{Global continuation of the equilibrium point}
\vspace{0.5 cm} \noindent In subsection 3.2 we point out that the
equilibrium point $P=(1/2,\mu/\sigma)$ of system (\ref{hamilton})
is not degenerate if $\omega T \neq k\pi$, for $k=1,\dots, 4$,
with $\omega=\mu/2$. This property implies the continuation of $P$
as a $T$-periodic solution of the perturbed system
(\ref{hamilton11})-(\ref{hamilton22}) for small values of
$\epsilon$. In this section we find an explicit interval
$[0,E^{*}]$ for the parameter $\epsilon$ where this continuation
exists and is unique. The main result is inspired by \cite{Zh},
where a similar technique is applied to the classical pendulum
equation.

\vspace{0.5 cm}
\noindent
For simplicity in the calculations we apply the following change of variables to the perturbed system (\ref{hamilton11})-(\ref{hamilton22})
\begin{equation}\label{cambio de coordenadas}
T:\R^{2}\to \R^{2},  \quad T(p,q)=\Big(p-\frac{1}{2},q-\frac{\mu}{\sigma}\Big),
\end{equation}

\noindent
and obtain the new system
\begin{equation}\label{PS}
\left\{
\begin{aligned}
\dot{p}&=-\frac{\sigma}{4}q+\sigma p^{2}q+\epsilon \,h(t)(\frac{1}{4}-p^{2})\\
\dot{q}&=\frac{\mu^{2}}{\sigma}p-\sigma q^{2}p + 2\epsilon h(t)(q+\frac{\mu}{\sigma})p
\end{aligned}
\right.
\end{equation}

\noindent Note that for the perturbed system (\ref{PS}), the
invariant lines are
\[
\left\{p=-1/2\right\}, \quad \left\{p=1/2\right\}, \quad
\left\{q=-\mu/\sigma\right\}.
\]

\vspace{0.5 cm}
\noindent
In the following, we assume that $\displaystyle{\omega T/2 \notin \N}$ with $\omega=\mu/2$.
%\[
%h^{*}=\max_{t\in [0,T]}\left\{h(t)\right\}
%\]

\begin{theorem}. There exists $\beta=\beta(\omega,\sigma)$ such that for $\epsilon \in [0,E^{*}[$ with
$\displaystyle{E^{*}=\frac{\sigma}{8\omega\beta h^{*}}}$, with
$h^{*}=\max_{t\in [0,T]}\left\{h(t)\right\}$, system (\ref{PS})
has a unique non-trivial $T$-periodic solution
$\Phi(t,\epsilon)=(\phi(t,\epsilon),\psi(t,\epsilon))$ as a
continuation of the equilibrium point $P=(0,0)$ of the autonomous
case.
\end{theorem}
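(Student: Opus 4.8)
The plan is to set up the problem as a fixed-point equation for the Poincaré map of system~(\ref{PS}) and apply a quantitative contraction/implicit-function argument on the explicit interval $[0,E^*[$, in the spirit of~\cite{Zh}. First I would write the unperturbed linear part of~(\ref{PS}) at the origin, namely $\dot p = -\frac{\sigma}{4}q$, $\dot q = \frac{\mu^2}{\sigma}p$, whose fundamental matrix generates a rotation with frequency $\omega=\mu/2$; the standing assumption $\omega T/2\notin\N$ guarantees that $I-M_0$ is invertible, where $M_0$ is the monodromy matrix of the linearization, so that the origin is a non-degenerate center for $\epsilon=0$. I would then recast the full system as $\dot z = Az + F(t,z,\epsilon)$ with $z=(p,q)$, $A$ the linear part above, and $F$ collecting the quadratic terms $\sigma p^2q$, $-\sigma q^2 p$ together with the $\epsilon h(t)$-terms. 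A $T$-periodic solution is a zero of the map $z_0\mapsto z(T;z_0,\epsilon)-z_0$, which by the variation-of-constants formula is equivalent to $z_0 = (I-M_0)^{-1}\int_0^T e^{A(T-s)}F(s,z(s),\epsilon)\,ds =: \mathcal{T}_\epsilon(z_0)$, the integral being understood along the solution with initial datum $z_0$.

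Next I would carry out the quantitative estimates. Let $\beta=\beta(\omega,\sigma)$ absorb $\|(I-M_0)^{-1}\|$, the norm of $e^{At}$ on $[0,T]$, and the Lipschitz constants of the nonlinearity on a fixed ball $B_r$ in $\R^2$; the factor $\sigma/(8\omega)$ in $E^*$ should emerge from these norms, and $h^*=\max_{t\in[0,T]}h(t)$ bounds the forcing. The goal is to show that for $\epsilon<E^*=\sigma/(8\omega\beta h^*)$ the operator $\mathcal{T}_\epsilon$ maps $B_r$ into itself and is a contraction there: self-mapping follows because $\mathcal{T}_0(0)=0$ and the $\epsilon$-dependent perturbation has size $\lesssim \beta\epsilon h^* r < r$, while the quadratic terms contribute a factor $\lesssim \beta r$ that can be made $<1/2$ by shrinking $r$; contraction follows from the Lipschitz bound $\|\mathcal{T}_\epsilon(z_0)-\mathcal{T}_\epsilon(\tilde z_0)\|\le (\beta\epsilon h^* + C\beta r)\|z_0-\tilde z_0\|$. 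Banach's fixed point theorem then yields a unique fixed point $\bar z(\epsilon)$ in $B_r$, hence a unique non-trivial $T$-periodic solution $\Phi(t,\epsilon)=(\phi,\psi)$; non-triviality (i.e. $\bar z(\epsilon)\ne 0$ for $\epsilon>0$, so that it genuinely continues $P$ rather than being the trivial equilibrium) comes from noting that $z\equiv 0$ is a solution only when $\epsilon h(t)\equiv 0$, which is excluded, together with the fact that the fixed point depends continuously on $\epsilon$ and reduces to $0$ at $\epsilon=0$.

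The main obstacle I anticipate is making the constant $\beta$ genuinely \emph{explicit} and confirming that the stated form $E^*=\sigma/(8\omega\beta h^*)$ is the one that drops out — in particular tracking the interplay between the radius $r$ of the ball (which controls the quadratic error) and the $\epsilon$-threshold, since a naive choice couples them. One clean route is to first kill the quadratic terms' contribution by fixing $r$ small (depending only on $\omega,\sigma$) so that $C\beta r\le 1/2$, and then let $\beta$ in the final statement be this enlarged constant, so that the contraction constant is $\le 1/2 + \beta\epsilon h^* < 1$ precisely when $\epsilon<E^*$; one must also check that the solution stays in $B_r$ on all of $[0,T]$ (not merely that $z_0\in B_r$), which requires a Gronwall-type a priori bound and may force a further, harmless shrinking of $r$. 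The remaining verifications — smooth dependence on initial data, measurability/continuity of the integrand, invertibility of $I-M_0$ under $\omega T/2\notin\N$ — are routine.
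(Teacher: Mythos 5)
Your strategy (contraction mapping on a small ball, $I-M_0$ invertible because $\omega T/2\notin\mathbb{N}$, $\beta$ absorbing norms of the linear resolvent, $E^*\sim\sigma/(8\omega\beta h^*)$) is the same philosophy as the paper, but your chosen ambient space is genuinely different, and that difference creates a gap. The paper does \emph{not} contract on $\mathbb{R}^2$ via the Poincar\'e map; it contracts on the Banach space $E$ of continuous $T$-periodic $\mathbb{R}^2$-valued curves, with the integral operator $(\mathcal{T}X)(t)=\int_0^T G(t,s)B(s,X(s),\epsilon)\,ds$ built from the explicit Green's matrix $G$. Because $\mathcal{T}$ acts on a \emph{trajectory} $X$ rather than on an initial datum $z_0$, the Lipschitz constant of $\mathcal{T}$ is simply $\beta$ times the Lipschitz constant of $B$ on the ball $\|X\|_\infty\le\rho$; no flow map and no Gronwall factor ever enter, and $\beta=\max_t\int_0^T|G(t,s)|\,ds$ is computed explicitly from the formula for $G$ (Appendix~1). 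This is the whole point of the Green's-function formulation.

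Your operator $\mathcal{T}_\epsilon(z_0)=(I-M_0)^{-1}\int_0^T e^{A(T-s)}F(s,z(s;z_0),\epsilon)\,ds$ instead contains the solution $z(s;z_0)$ of the nonlinear ODE, so its Lipschitz constant must include the Lipschitz dependence of the flow on $z_0$. Your displayed estimate $\|\mathcal{T}_\epsilon(z_0)-\mathcal{T}_\epsilon(\tilde z_0)\|\le(\beta\epsilon h^*+C\beta r)\|z_0-\tilde z_0\|$ is not correct as written: the integrand difference is $F(s,z(s;z_0))-F(s,z(s;\tilde z_0))$, and bounding $\sup_s\|z(s;z_0)-z(s;\tilde z_0)\|$ requires a Gronwall inequality producing a factor $e^{LT}$ with $L$ the Lipschitz constant of the \emph{full} right-hand side on $B_r$ (in particular $L\ge\|A\|$, which is of order $\omega$ and not controllable by shrinking $r$). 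You invoke Gronwall only for the a~priori confinement of $z(s;z_0)$ to $B_r$, not for the Lipschitz estimate itself — that is the missing step. Repairing it forces an extra $e^{\|A\|T}$-type factor into your $\beta$, so the explicit constant $E^*=\sigma/(8\omega\beta h^*)$ of the theorem is recovered only after a further redefinition of $\beta$, and the clean interpretation $\beta=\max_t\int_0^T|G(t,s)|\,ds$ (and the explicit bound in the Remark) is lost. In short: your route can be made to work but yields a strictly worse and less explicit threshold; the paper's Green's-matrix/function-space route is precisely designed to avoid the flow-map Lipschitz estimate and deliver the clean $\beta$.
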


\noindent \textbf{Remark.} From the proof,
$\beta=\beta(\omega,\sigma)$ is explicitly given by
\[
\beta=\beta(\omega,\sigma)=\displaystyle{\max_{t\in [0,T]}}\int_{0}^{T}|G(t,s)|ds,
\]
where $G$ is a Green's matrix function associated to the system
(\ref{PS}) defined below, $|.|$ means the usual uniform matrix
norm. The explicit bound
\[
\beta(\omega,\sigma)< \frac{T}{|\sin (\omega T/2)|} \max \left\{\frac{1}{2},\frac{2}{\sigma},\frac{\sigma}{8\omega}\right\}.
\]
is easily derived.
\noindent
\begin{proof}

\vspace{0.5 cm}
\noindent
We rewrite the system (\ref{PS}) in the form
\begin{equation}\label{PSvector}
\dot{X}=AX+B(t,X,\epsilon),
\end{equation}
with
\[
A=\begin{pmatrix}
0&-\displaystyle{\sigma/4}\\
\displaystyle{\mu^{2}/\sigma}&0
\end{pmatrix},\quad B(t,X,\epsilon)=\begin{pmatrix}
B_{1}(t,X,\epsilon)\\
B_{2}(t,X,\epsilon)
\end{pmatrix},
\]
and
\begin{equation*}
\begin{split}
B_{1}(t,X,\epsilon)&=\displaystyle{\sigma p^{2}q+\epsilon\, h(t)(\frac{1}{4}-p^{2})},\\
B_{2}(t,X,\epsilon)&=-\sigma q^{2}p+2\epsilon\, h(t)(q+\frac{\mu}{\sigma})p.
\end{split}
\end{equation*}
Now if $X(t,\epsilon)$ is a $T$-periodic solution of (\ref{PSvector}) the method of variation of constants provied us the next formula
\begin{equation}\label{formula}
X(t,\epsilon)=\int_{0}^{T}G(t,s)B(s,X(s,\epsilon),\epsilon) ds,
\end{equation}
where $G(t,s)$ is the Green's matrix function associated to this problem given by
\[
G(t,s)=\begin{cases}
\displaystyle{J^{-1}e^{(t-s)A}}; & \textit{if} \quad 0\leq s\leq t\leq T\\
\displaystyle{J^{-1}e^{T A}e^{(t-s)A}}; & \textit{if} \quad 0\leq t\leq s\leq T
\end{cases}
\]
\noindent
where the matrix $J$ is defined by
\[
J=(I_{2}-e^{T A})=\begin{pmatrix}
\displaystyle{1-\cos \omega T}&\displaystyle{\frac{\sigma \omega}{4}\sin \omega T} \\
\displaystyle{-\frac{4 \omega}{\sigma} \sin \omega T}& \displaystyle{1-\cos \omega T}
\end{pmatrix}.
\]

Explicitly, the Green's matrix is
\[
G(t,s)=\frac{1}{2}\begin{pmatrix}
\displaystyle{\frac{\sin \omega(T/2-(t-s))}{\sin(\omega T/2)}}&-\displaystyle{\frac{\sigma}{4\omega}\frac{\cos \omega(T/2-(t-s))}{\sin(\omega T/2)}} \\
\displaystyle{\frac{4\omega}{\sigma}\frac{\cos \omega(T/2-(t-s))}{\sin(\omega T/2)}}& \displaystyle{\frac{\sin \omega(T/2-(t-s))}{\sin(\omega T/2)}}
\end{pmatrix},
\]
\noindent for all $0\leq s\leq t\leq T$  and
\[
G(t,s)=\frac{1}{2}\begin{pmatrix}
\displaystyle{\frac{\sin \omega((s-t)-T/2)}{\sin(\omega T/2)}}&-\displaystyle{\frac{\sigma}{4\omega}\frac{\cos \omega((s-t)-T/2)}{\sin(\omega T/2)}} \\
\displaystyle{\frac{4\omega}{\sigma}\frac{\cos \omega((s-t)-T/2)}{\sin(\omega T/2)}}& \displaystyle{\frac{\sin \omega((s-t)-T/2)}{\sin(\omega T/2)}}
\end{pmatrix}
\]
for $0\leq t\leq s\leq T$. In Appendix 1 we present the explicit
calculation of (\ref{formula}).

\vspace{0.5 cm}
\noindent
Consider $\Omega=\R\times [0,E^{*}]$ and the normed space
\begin{equation*}
E=\left\{X\in C(\Omega,\R^{2}):\textit{$X$ is $T$-periodic} \right\},
\end{equation*}
with the norm $\left\|\cdot \right\|_{\infty}$. We define the operator $\mathcal{T}:E\to E$ given by
\[
(\mathcal{T}X)(t):=\int_{0}^{T}G(t,s)B(s,X(s,\epsilon),\epsilon)\, ds,
\]
which is a completely continuous operator (with the norm $\left\|\cdot \right\|_{\infty}$ ) from $E$ to itself. It follows from (\ref{formula}) that $X(t,\epsilon)$ is a $T$-periodic solution of (\ref{PSvector}) if and only if $X$ is a fixed point of $\mathcal{T}$.

\noindent
Now we concentrate on estimating a value $E^{*}$ where the operator $\mathcal{T}$ will be a contraction a let invariant a closed ball $\mathcal{B}=\left\{X\in E: \left\|X \right\|_{\infty}\leq \rho\right\}$ for some positive number $\rho=\rho(\omega,\sigma, \epsilon)$. To this end, let $X=(p,q)$, $Y=(\widetilde{p},\widetilde{q})$ inside the ball $\mathcal{B}$. Observe that
\begin{equation*}
\begin{split}
\left\|(\mathcal{T}X)(t)-(\mathcal{T}Y)(t)\right\|_{\infty}&=\left\|\int_{0}^{T}G(t,s)(B(s,X,\epsilon)-B(s,Y,\epsilon)ds\right\|_{\infty}\\
&\leq \Big(\displaystyle{\max_{t\in [0,T]}}\int_{0}^{T}|G(t,s)|ds \Big) \left\|B(s,X,\epsilon)-B(s,Y,\epsilon) \right\|_{\infty}
\end{split}
\end{equation*}
where
\[
\beta=\beta(\mu,\sigma)=\displaystyle{\max_{t\in [0,T]}}\int_{0}^{T}|G(t,s)|ds =\max_{1\leq i,j\leq 2}\Big(\max_{t\in [0,T]}\int_{0}^{T}|G_{i,j}(t,s)|ds\Big).
\]
Next we consider
\[
B_{1}(t,X,\epsilon)-B_{1}(t,Y,\epsilon)=\displaystyle{\epsilon h(t)(\widetilde{p}^{2}-p^{2})+\sigma (p^{2}q-\widetilde{p}^{2}\widetilde{q})}.
\]
Notes that
\[
|\epsilon h(t)(\widetilde{p}^{2}-p^{2})|\leq 2\epsilon h^{*}\rho \left\|X-Y\right\|_{\infty}, \quad \text{with} \quad h^{*}=\max_{t\in [0,T]}\left\{h(t)\right\},
\]
and
\begin{equation*}
|p^{2}q-\widetilde{p}^{2}\widetilde{q}|=|(p-\widetilde{p})(pq+\widetilde{p}q)+\widetilde{p}^{2}(q-\widetilde{q})|\leq 3\rho^{2} \left\|X-Y\right\|_{\infty}
\end{equation*}
therefore
\[
\left\|B_{1}(t,X,\epsilon)-B_{1}(t,Y,\epsilon)\right\|_{\infty}\leq 2\rho(\epsilon h^{*}+ 3\sigma \rho/2)\left\|X-Y\right\|_{\infty}.
\]

\noindent
On the other hand
\[
|B_{2}(t,X,\epsilon)-B_{2}(t,Y,\epsilon)|=|2\epsilon h(t)(qp-\widetilde{q}\widetilde{p})+\sigma(\widetilde{q}^{2}\widetilde{p}-q^{2}p)+\frac{2\epsilon h(t)\mu}{\sigma}(p-\widetilde{p})|.
\]
For the first two terms in the right hand we have that
\[
|\widetilde{q}^{2}\widetilde{p}-q^{2}p|=|(\widetilde{q}-q)(\widetilde{q}+p)p+\widetilde{q}^{2}\widetilde{p}-p\widetilde{q}|\leq 3\rho^{2}\left\|X-Y\right\|_{\infty},
\]
and
\[
|qp-\widetilde{q}\widetilde{p}|=|p(q-\widetilde{q})+(p-\widetilde{p})\widetilde{q}|\leq 2\rho\left\|X-Y\right\|_{\infty}
\]
In consequence
\[
\left\|B_{2}(t,X,\epsilon)-B_{2}(t,Y,\epsilon)\right\|_{\infty}\leq \Big(2\rho(\epsilon h^{*}+3\sigma\rho/2)+2\epsilon h^{*}(\rho+\mu/\sigma)\Big)\left\|X-Y\right\|_{\infty}
\]

\noindent This estimate implies that
\[
\left\|B(t,X,\epsilon)-B(t,Y,\epsilon)\right\|_{\infty}\leq \Big(2\rho(\epsilon h^{*}+3\sigma\rho/2)+2\epsilon h^{*}(\rho+\mu/\sigma)\Big)\left\|X-Y\right\|_{\infty}
\]
Finally
\[
\left\|\mathcal{T}X-\mathcal{T}Y\right\|_{\infty}\leq \beta \Big(2\rho(\epsilon h^{*}+3\sigma\rho/2)+2\epsilon h^{*}(\rho+\mu/\sigma)\Big)\left\|X-Y\right\|_{\infty}
\]

\noindent
From this last inequality, we concluded that the operator $\mathcal{T}$ will be a contraction if
\[
\frac{3}{2}\sigma\rho^{2} +4\epsilon h^{*}\rho +\frac{2}{\sigma}\epsilon h^{*}\mu < \frac{1}{\beta}.
\]
Now we use the fact that the quadratic convex function
\[
f(\rho)=\frac{3}{2}\sigma\rho^{2} +4\epsilon h^{*}\rho +\frac{2}{\sigma}\epsilon h^{*}\mu -\frac{1}{\beta},
\]
is negative in $[0,\rho_{0}[$ if $f(0)<0$, (this is equivalent to have $\displaystyle{\epsilon < \frac{\sigma}{4\mu\beta h^{*}}}$)  with
\[
\rho_{0}=\frac{-4\epsilon h^{*}+2\sqrt{\displaystyle{(2\epsilon h^{*})^{2}-3\sigma\Big(\frac{\epsilon \mu h^{*}}{\sigma}-\frac{1}{2\beta}\Big)}}}{3\sigma}.
\]

\noindent
\vspace{0.5 cm}
In this way for $\epsilon \in [0,E^{*}[$ with $\displaystyle{E^{*}=\frac{\sigma}{4\mu\beta h^{*}}}$  we have
\[
\left\|\mathcal{T}X-\mathcal{T}Y\right\|_{\infty}\leq k\left\|X-Y\right\|_{\infty}
\]
with $0<k<1$ for all $X,Y \in \mathcal{B}$. Remains to knows the size of radius of the ball $\mathcal{B}$. Let $Y_{0}=(0,0)$, then
\[
(\mathcal{T}Y_{0})(t)=\frac{\epsilon}{4}\begin{pmatrix}
\displaystyle{\int_{0}^{T}G_{1,1}(t,s)h(s)ds}\\
\displaystyle{\int_{0}^{T}G_{2,1}(t,s)h(s)ds}
\end{pmatrix},
\]
moreover
\[
\left\|\mathcal{T}X\right\|_{\infty}-\left\|\mathcal{T}Y_{0}\right\|_{\infty}\leq \left\|\mathcal{T}X-\mathcal{T}Y_{0}\right\|_{\infty}\leq k \left\|X-Y_{0}\right\|_{\infty}
\]
this implies
\[
\left\|\mathcal{T}X\right\|_{\infty}\leq k\left\|X\right\|_{\infty}+\left\|\mathcal{T}Y_{0}\right\|_{\infty}=k\rho+\epsilon h^{*}\beta/4.
\]

\noindent
We take $\displaystyle{\rho=\frac{\epsilon h^{*}\beta}{4(1-k)}}$ and $\epsilon$ small enough sucht that $f(\rho)<0$. Combining these estimatives we have that
\[
\left\|\mathcal{T}X\right\|_{\infty}\leq \rho, \quad \text{and} \quad \left\|\mathcal{T}X-\mathcal{T}Y\right\|_{\infty}<k\left\|X-Y\right\|_{\infty},
\]
for all $X,Y \in \mathcal{B}.$ Then $\mathcal{T}$ maps the closed
ball $\mathcal{B}$ into itself. Thus it follows from the Schauder
fixed point theorem \cite{Schauder} that $\mathcal{T}$ has a fixed
point $\Phi$ in $\mathcal{B}$. Since $\mathcal{T}: \mathcal{B}\to
\mathcal{B}$ is a contraction then $\mathcal{T}$ the fixed point
is unique.
\end{proof}

\section{Local continuation of periodic solutions from the autonomous case}

\noindent From Section 3 we know that $P=(1/2,\mu/\sigma)$ is an
equilibrium point for the autonomous system (\ref{hamilton}).
Moreover $P$ is a center and therefore there is a domain of
periodic solutions around $P$.  In Section 4, after a change of
coordinates (\ref{cambio de coordenadas}), the point $P$ is
extended for $\epsilon \in \, ]0,E^{*}]$ as a $T$-periodic
solution $\Phi(t,\epsilon)$ of (\ref{PS}).  Furthermore, this
extension is unique and continuous.

%\vspace{0.5 cm}
%\noindent
%Consider a new family of curves given by

%\begin{equation}
%Y(t,\epsilon)=X(t,\epsilon)-\Phi(t,\epsilon)
%\end{equation}

%\noindent
%with $\left\|X\right\|_{\infty}<\rho$, such curves are solution of the system
%\begin{equation}\label{NPS}
%\dot{Y}=AY+G(t,Y,\epsilon)
%\end{equation}

%\noindent
%with $G(t,Y,\epsilon)=B(t,X,\epsilon)-B(t,\Phi,\epsilon)$.

%\vspace{0.5 cm}
%\noindent
%Unlike system (\ref{PS}), the system (\ref{NPS}) has $Y=(0,0)$ as a trivial solution for all $\epsilon \in [0,E^{*}]$. Therefore, for a non-trivial  solution $Y(t,e)$ of (\ref{NPS}) follows that $\left\|Y(t,e)\right\|_{2}>0$.

\vspace{0.5 cm} \noindent In this section, we look for
multiplicity of periodic solutions. To this purpose, we assume
that $h(t)$ is an even function. Under this assumption, (\ref{PS})
has the following symmetry
\[
(t,p,q)\to (-t,-p,q).
\]
Our aim is to obtain $nT$-periodic solutions of (\ref{PS}) as a
result of the local continuation of $nT$-periodic solutions of the
autonomous Hamiltonian system (\ref{hamilton}). To this end we
present the next result which is inspired by the results on
\cite[Section 5]{Llibre-Ortega}.

Some notation is needed. $[\cdot]$ denotes the integer part
function. Fix $n_{*}=\displaystyle{\big[\frac{2\pi}{\omega\,
T}]}$. For a fixed integer $n\geq n_{*}$, define
$\displaystyle{\vartheta_{n}=\big[\frac{\omega\, nT}{2\pi}]}$.

\begin{theorem} For all $n>n_{*}$ and $m=1,\dots, \vartheta_{n}$, there exists $\epsilon_{n,m}>0$ such that for all $0<\epsilon <
\epsilon_{n,m}$ system (\ref{PS}) has a non-trivial $nT$-periodic
solution $X(t,\epsilon)=(p(t,\epsilon),q(t,\epsilon))$ where
$q(t,\epsilon)$ crosses exactly $m$ times through the horizontal
line $q=0$  in the interval  $[0,nT/2]$.
\end{theorem}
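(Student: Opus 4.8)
The plan is to exploit the symmetry $(t,p,q)\to(-t,-p,q)$ of the even‑perturbed system together with a continuation argument based on the implicit function theorem applied to a suitably reduced Poincaré‑type map, in the spirit of \cite[Section 5]{Llibre-Ortega}. First I would pass to the autonomous system (\ref{hamilton}) written in the shifted coordinates (\ref{cambio de coordenadas}), where the center sits at the origin, the invariant line $q=0$ (i.e.\ $q=-\mu/\sigma$ in old variables, but in the present section it is the line through the center) plays the role of a symmetry axis, and the period function $\tau(c)$ of the closed orbit through a point $(0,q)=(0,c)$ on that axis is well defined. The key structural fact, already recorded in subsection 3.2, is that the center is \emph{not isochronous}: the heteroclinic loop at energy $H=0$ forces $\tau(c)\to+\infty$ as the orbit approaches the loop, while near the center $\tau(c)\to 2\pi/\omega$ with $\omega=\mu/2$. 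Hence $\tau$ is a non‑constant continuous function whose range contains the interval $(2\pi/\omega,+\infty)$, and by (a version of) the twist hypothesis its derivative is nonzero on a dense set; in particular, for each integer $m$ with $1\le m\le\vartheta_n=[\omega nT/(2\pi)]$ there is a closed orbit $\Gamma_{n,m}$ of the autonomous system whose minimal period divides $nT$ in such a way that $q$ vanishes exactly $m$ times on a half‑period $[0,nT/2]$ — concretely, one selects the orbit with $m\,\tau(c)=nT$ when this is compatible with $m\le\vartheta_n$, using $n>n_*$ to guarantee $nT/m$ lands in the range of $\tau$.

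Next I would set up the continuation. Because $h$ is even, a solution of (\ref{PS}) is $nT$‑periodic and symmetric precisely when it starts on the axis $\{p=0\}$ at $t=0$ and returns to that axis at $t=nT/2$; this is the standard reduction of the search for symmetric periodic orbits to a single scalar equation $F(c,\epsilon):=p\big(nT/2;\,(0,c),\epsilon\big)=0$, where $(p(\cdot;\,(0,c),\epsilon),q(\cdot;\,(0,c),\epsilon))$ is the solution of (\ref{PS}) with initial datum $(0,c)$. At $\epsilon=0$ the value $c=c_{n,m}$ corresponding to $\Gamma_{n,m}$ solves $F(c_{n,m},0)=0$. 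The crucial transversality condition $\partial_c F(c_{n,m},0)\neq 0$ is exactly the non‑degeneracy coming from the twist property: $\partial_c F(c,0)$ is, up to a nonzero factor given by the variational flow along $\Gamma_{n,m}$, proportional to $\tfrac{d}{dc}\big(m\,\tau(c)\big)=m\,\tau'(c)$, which is nonzero for the chosen $c_{n,m}$ since $\tau$ is strictly monotone on the relevant parameter interval (monotonicity of the period function for this planar Hamiltonian can be checked directly, or one simply restricts to a value $c_{n,m}$ at which $\tau'\neq 0$, which exists for all but countably many target periods). The implicit function theorem then yields, for each pair $(n,m)$ with $n>n_*$, $1\le m\le\vartheta_n$, a threshold $\epsilon_{n,m}>0$ and a $C^1$ branch $c=c(\epsilon)$, $0<\epsilon<\epsilon_{n,m}$, with $F(c(\epsilon),\epsilon)=0$; the associated solution $X(t,\epsilon)$ is the desired non‑trivial $nT$‑periodic solution.

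It remains to check that the sign/crossing structure is preserved along the branch: $q(t,\epsilon)$ must cross $q=0$ exactly $m$ times on $[0,nT/2]$. This follows by continuity and a transversality‑of‑zeros argument: at $\epsilon=0$ the zeros of $q$ on $\Gamma_{n,m}$ are simple (the orbit crosses the axis $q=0$ transversally, since on that axis $\dot q\neq 0$ away from the center, as one reads off (\ref{PS}) with $\epsilon=0$: $\dot q=\tfrac{\mu^2}{\sigma}p-\sigma q^2 p$ vanishes on $p=0$ but $\dot p=-\tfrac\sigma4 q+\sigma p^2 q=-\tfrac\sigma4 q\neq0$ there), so the number of zeros is locally constant under the $C^1$ perturbation of the solution; shrinking $\epsilon_{n,m}$ if necessary preserves this count on the compact interval $[0,nT/2]$. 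The main obstacle, and the step deserving the most care, is establishing the non‑degeneracy $\partial_c F(c_{n,m},0)\neq0$ uniformly enough to get an honest interval $(0,\epsilon_{n,m})$ — equivalently, controlling the period function $\tau(c)$ of the quartic Hamiltonian (\ref{ham}) near the center and near the zero‑energy heteroclinic loop and ruling out critical points of $\tau$ at the selected levels; once the twist is quantified, the rest is a routine IFT‑plus‑continuity argument and the bookkeeping of the indices $n_*$ and $\vartheta_n$ (which are just the combinatorial constraints ensuring $nT/m$ lies in the range $(2\pi/\omega,\infty)$ of $\tau$).
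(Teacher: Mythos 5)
Your proposal follows essentially the same route as the paper: reduce, via the even symmetry of $h$, to the scalar boundary condition $F(c,\epsilon)=p(nT/2;(0,c),\epsilon)=0$; identify the zeros at $\epsilon=0$ through the resonance relation $m\,\tau(c)=nT$; establish nondegeneracy via the period function; and continue by the implicit function theorem, preserving the crossing count by continuity. The paper packages the transversality step as a Brouwer-index computation $\mathrm{ind}(p(nT/2,\cdot,0),\xi_m)=(-1)^m$, but this rests on the same monotonicity of $T(\xi)$ that you invoke, so the arguments coincide.

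One caveat deserves mention. Your fallback, ``restrict to a value $c_{n,m}$ at which $\tau'\neq 0$, which exists for all but countably many target periods,'' would not deliver the theorem as stated: the claim is for every $m=1,\dots,\vartheta_n$, and the $c_{n,m}$ are pinned down by the equation $m\,\tau(c)=nT$, not free to choose. What is actually needed, and what the paper supplies in Appendix~2, is strict monotonicity of the period function on the whole interval: after the change of variables the period is $T(\zeta)=\tfrac{8}{\mu}K(2\zeta)$ with $K$ the complete elliptic integral of the first kind, whence $dT/d\zeta>0$ and $T(\zeta)\to 2\pi/\omega$, $T(\zeta)\to\infty$ at the endpoints. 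Your primary option (``checked directly'') is therefore the one that works, and is exactly what the paper carries out. A very minor point: in your transversality-of-zeros remark, the relevant fact is that on $\{q=0\}$ one has $\dot q=(\mu^2/\sigma)p\neq 0$ for $p\neq 0$; the observation about $\dot p$ on $\{p=0\}$ is extraneous to that count, though harmless.
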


\begin{proof}  Let $X(t;\xi,0)=(p(t;\xi,0),q(t,\xi,0))$ a solution (\ref{PS}) in the autonomous case ($\epsilon=0$)  satisfying the initial condition
\begin{equation}\label{icps}
p(0,\xi,0)=0, \quad q(0,\xi,0)=\xi,
\end{equation} and the boundary condition
\begin{equation}\label{ibcps}
p(nT/2,\xi,0)=0.
\end{equation}

In Appendix 2 we prove that the family of periodic solutions
$X(t;\xi,0)$ of (\ref{PS}) for $0<\xi< \mu/ \sigma$ has an
increasing period function $T(\xi)$  such that
\[
\lim_{\xi \searrow 0} T(\xi)=T_{lp} \quad \text{and} \quad  \lim_{\xi \nearrow \mu/ \sigma} T(\xi)=\infty,
\]
\noindent
with $\displaystyle{T_{lp}=2\pi/\omega}$. Moreover, in the autonomous case we have the following symmetry
\[
(t,p,q)\to (-t,p,-q),
\]
therefore we can assume that $\xi>0$. Given $n\in \N$, $X(t;\xi,0)$ is a $nT$-periodic solution of (\ref{PS}) if and only if there is an integer $m \geq 1$ such that
\begin{equation}\label{vp}
m\,T(\xi)=nT.
\end{equation}
Since
\[
\inf T(\xi)=T_{lp}=\frac{2\pi}{\omega},
\]
we  have  $\displaystyle{\frac{nT}{m}>\frac{2\pi}{\omega}}$ therefore $\displaystyle{m<\frac{\omega\, nT}{2\pi}}$.

\vspace{0.5 cm}
\noindent
Let
\[
\mu/\sigma>\xi_{1}>\xi_{2}> \dots >\xi_{\vartheta_{n}}>0,
\]
be the solutions of (\ref{vp}) with $m=1,2,\dots, \vartheta_{n}$. Since we consider the boun\-dary condition (\ref{ibcps})
Then
\begin{equation*}
\begin{split}
\mathcal{K}_{0}&=\left\{\xi \in \R: p(n T/2,\xi,0)=0\right\}\\
&=\{-\xi_{1},\dots,-\xi_{\vartheta_{N}},\,0\,,\xi_{1},\dots, \xi_{\vartheta_{N}}\}.\\
\end{split}
\end{equation*}

\noindent
Now we compute the index for $\xi_{1} \in
\mathcal{K}_{0} $. Note the following
\begin{itemize}
    \item If $\xi<\xi_{1}$ then $\displaystyle{\frac{T(\xi)}{2}<\frac{T(\xi_{1})}{2}=n T/2},$
    \item If $\xi>\xi_{1}$ then $\displaystyle{\frac{T(\xi)}{2}>\frac{T(\xi_{1})}{2}=n T/2}.$
\end{itemize}

Since $T(\xi)$ is increasing, we have for values $\xi$ close to $\xi_{1}$ that
\begin{equation*}
\begin{split}
    p(nT/2;\xi,0)>0 & \quad \text{if} \quad \xi<\xi_{1},\\
    p(nT/2;\xi,0)<0 & \quad \text{if} \quad \xi>\xi_{1}.
\end{split}
\end{equation*}

\noindent From here, the Brouwer index
$\text{ind}(p(nT/2,\cdot,0),\xi_{1})=-1$ (see for instance
\cite{De} for definition and basic properties). From the previous
calculus we can conclude that in general
\begin{equation}\label{ecuaci¾n-periodo}
\text{ind}((p(nT/2,\cdot,0),\xi_{m})=(-1)^{m}.
\end{equation}
By symmetry the indices of $-\xi_{m}$ are
$\text{ind}(p(nT/2,\cdot,0),\xi_{m})=(-1)^{m+1}$. We also
calculate the index at $\xi_{0}=0$. We do this by linearization,
i.e.,
\[
\text{ind}((p(nT/2,\cdot,0),0)=\text{sign}\Big(\frac{\partial p}{\partial \xi}(nT/2;0,0)\Big).
\]

To this end we consider the linearized problem of at $(0,\xi)$ and we observe that $\displaystyle{\frac{\partial X}{\partial \xi}(t,\xi,0)}$ is solution of the initial value problem
\[
\dot{X}=AX, \quad X(0)=\begin{pmatrix}
0\\
1
\end{pmatrix},
\]
therefore
\[
\text{ind}((p(nT/2,\cdot,0),0)=\text{sign}\Big(\frac{4\omega}{\sigma}\sin\big(\frac{\omega\, n T}{2}\big)\Big)=(-1)^{\vartheta_{n}}.
\]

\noindent
In conclusion for $\epsilon=0$ there exists $\vartheta_{n}$ nontrivial
$nT$-periodic solutions of (\ref{nps}) with
\[
p(0;\xi,0)=0, \quad q(0;\xi,0)=\xi>0,
\]

\noindent
such solutions can be labeled according to the number of times $m$ that the function $\displaystyle{q(t;\xi,0)}$ passes through the horizontal line $q=0$  in $[0,nT/2]$ with $m=1,\ldots,\vartheta_{n}$ and initial conditions
\[
\mu/\sigma>q_{1}(0)=\xi_{1}>\cdots
>q_{\vartheta_{n}}(0)=\xi_{\vartheta_{n}}>0.
\]

\noindent
From (\ref{ecuaci¾n-periodo}) and the Implicit Function theorem, for each $n\geq n_{*}$ with $\displaystyle{n_{*}=[\frac{2\pi}{\omega T}]}$ the is a $C^{1}$ function $E:[0,\epsilon_{n,m}[\to \R$, $\epsilon \to E(\epsilon)$ such that $X(t,\epsilon)$ is a $nT$-periodic solution of (\ref{PS}) that satisfy the initial condition
\[
X(0,\epsilon)=\begin{pmatrix}
0\\
E(\epsilon)
\end{pmatrix},
\]
and the boundary condition
\[
p(nT/2,E(\epsilon),\epsilon)=0,
\]
for all $\epsilon \in [0,\epsilon_{n,m}[.$ This completes the proof.

\end{proof}

\section{Conclusions and outlook}

In this work we have shown the appearance of chaotic and periodic
behavior in chemical systems as a direct consequence of internal
fluctuations. We have concentrated on the simple reaction $A
\longleftrightarrow 2A$, which at the mean-field (fluctuations
free) level simply shows logistic growth. While this theory
correctly predicts the short time behavior, long times are
dominated by fluctuations. In this case we have seen that
fluctuations may sustain metastable states and periodic orbits.
For this same reaction, if we allow the reaction rates to vary
periodically in time we find the presence of chaotic orbits
sustained by chemical noise, while the mean-field theory only
reflects periodic orbits. We have also been able to rigorously
prove the existence of even and periodic solutions of the
nonautonomous system as a result of the global continuation of
even and periodic solutions of the autonomous system. It is
important to remark here that the deterministic trajectories
studied here will not be observed in experiments, but their noisy
counterparts. That is, small Gaussian distributed fluctuations
about the deterministic trajectories studied herein will be
present at any time. We also note that the appearance of
periodicity and chaos in chemical reactions due to intrinsic
fluctuations has already been
investigated~\cite{kapral,kapral2,ross}. Nevertheless our approach
is fundamentally different as these previous studies focused on
systems close to a bifurcation threshold. So the periodic and
chaotic behaviors were already present in the mean-field
deterministic dynamics, and the internal noise anticipated the
threshold. In our case, the mean-field dynamics were of too low
dimensionality for showing periodic and chaotic orbits
respectively. In this sense, the oscillations and chaos were
purely sustained by chemical fluctuations.

\begin{figure}[h]
\begin{center}
\includegraphics[scale=0.7]{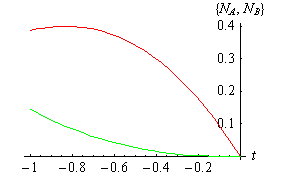}
\caption{Optimal paths to extinction in Lotka-Volterra
dynamics~(\ref{lotkavolterraeq}) for $\mu=\sigma=\lambda=1$. The
initial numbers of predators (represented by the red upper line)
and preys (represented by the green lower line) are respectively
$N_A(t=-1) \approx 0.39$ and $N_B(t=-1) \approx 0.15$. The
extinction takes place at $t=0$ and for $p_a=p_b=0$, $q_a=-1$ and
$q_b=1$. The trajectories lie in $H=1$ manifold.}
\label{lotkavolterra}
\end{center}
\end{figure}

We have also studied the optimal paths to extinction in a plankton
population dynamics model. Although extinction phenomena have been
considered numerous times within this
framework~\cite{assaf1,assaf2,meerson}, the approach was based on
the Hamiltonian dynamics on the stationary $H=0$ manifold. The
Hamiltonian dynamical system in our case was however degenerated
and all the extinction trajectories fell on the $H \neq 0$
manifolds. We leave for future work the extension of our results
to multispecies reactions, which are characterized by large
deviation Hamiltonian dynamical systems of higher dimension.
Notably, there is an important problem of a two species reaction
which is very much related to the plankton extinction described
herein. It is the appearance of extinction events in the
predator-prey Lotka-Volterra dynamics. This dynamics is formalized
by means of the reaction set
\begin{equation}
A \stackrel{\mu}\longrightarrow \emptyset, \quad B
\stackrel{\sigma}\longrightarrow 2B, \quad A+B
\stackrel{\lambda}\longrightarrow 2A,
\end{equation}
where $A$ is the predator and $B$ is the prey. This system has
been studied by means of a diffusion approximation and solving the
corresponding Fokker-Planck equation~\cite{parker}. A different
alternative is using a large deviations approach, which yields the
Hamiltonian
\begin{equation}
\label{lotkavolterraeq}
\mathcal{H}= \mu (p_a-1) q_a-\sigma p_b
(p_b-1) q_b +\lambda p_a(p_b-p_a)q_a q_b.
\end{equation}
The numbers of predator and prey are given respectively by $N_A(t)
= p_a(t) q_a(t)$ and $N_B(t) = p_b(t) q_b(t)$. The optimal paths
to extinction appear again on $H \neq 0$ manifolds. We have
numerically computed two of them in Fig.~\ref{lotkavolterra}. A
large deviation drives the system to extinction which takes place
when $p_a$ and $p_b$ become zero. One can see in this figure that
the decreasing number of preys pulls the predators to extinction,
as it is reasonable to expect. Alternatively, a large deviation
driving the predators to extinction will lead to an unbounded
growth of the preys. We expect that a systematic study of the
extinction trajectories of the Lotka-Volterra
Hamiltonian~(\ref{lotkavolterraeq}) will shed a valuable light on
the probabilistic structure of extinction events in predator-prey
dynamics.

\section*{Acknowledgments}

The authors are grateful to Alex Kamenev and Baruch Meerson for
helpful discussions are correspondence respectively; they are also
grateful to one anonymous referee for his/her valuable comments.
A. Rivera and P.J. Torres are grateful to Rafael Ortega for useful
insights and for pointing out the reference \cite{Zh}. Carlos
Escudero is grateful to the Departamento de Mat\'ematica Aplicada
of the Universidad de Granada for its hospitality. This work has
been partially supported by the MICINN (Spain) through Project No.
MTM2008-02502.

\section*{Appendix 1}

The purpose of this appendix is to show more explicitly the calculations to obtain the expression of the operator $\mathcal{T}$ in the formula (\ref{formula}). To this end, consider the system
\begin{equation}\label{As}
\dot{X}=AX+B(t,X)
\end{equation}
with
\[
A=\begin{pmatrix}
0&-\displaystyle{\sigma/4}\\
\displaystyle{\mu^{2}/\sigma}&0
\end{pmatrix},\quad B(t,X)=\begin{pmatrix}
B_{1}(t,X)\\
B_{2}(t,X)
\end{pmatrix},
\]

\noindent
The fundamental matrix of the associated autonomous system $\dot{X}=AX$ is the exponential matrix $e^{tA}$ given by
\[
e^{tA}=\begin{pmatrix}
\cos \omega t &-\displaystyle{\frac{\sigma}{4\omega}}\sin \omega t\\
\displaystyle{\frac{4\omega}{\sigma}}\sin \omega t&\cos \omega t
\end{pmatrix}.
\]

\noindent
Then by the method of variations of constants the general solution of $(\ref{As})$ is given by
\begin{equation}\label{As-2}
\begin{split}
X(t)=e^{tA}C+\int_{0}^{t}e^{(t-s)A}B(s,X)\,ds,
\end{split}
\end{equation}

\noindent
where $C$ a constant vector. Since we are looking for $T$-periodic solutions imposing the boundary conditions $X(0)=X(T)$ we get
\[
C=e^{TA}C+\int_{0}^{T}e^{(t-s)A}B(s,X)\,ds
\]
this implies
\begin{equation*}
\begin{split}
(I_{2}-e^{TA})C&=\int_{0}^{T}e^{(T-s)A}B(s,X)\,ds\\
C&=(I_{2}-e^{TA})^{-1}\int_{0}^{T}e^{(T-s)A}B(s,X)\,ds,
\end{split}
\end{equation*}
replacing this in the formula $(\ref{As-2})$ we obtain
\begin{equation*}
X(t)=e^{tA}(I_{2}-e^{TA})^{-1}\int_{0}^{T}e^{(T-s)A}B(s,X)\,ds + \int_{0}^{t}e^{(t-s)A}B(s,X)\,ds,
\end{equation*}
On the other hand
\[
J=(I_{2}-e^{TA})=\begin{pmatrix}
1-\cos \omega T&\displaystyle{\frac{\sigma}{4\omega}}\sin \omega T\\
-\displaystyle{\frac{4\omega}{\sigma}}\sin \omega T&1-\cos \omega T
\end{pmatrix},
\]
since $\det{J}=2(1-\cos \omega T)$ then
\[
J^{-1}=\frac{1}{2}\begin{pmatrix}
1&-\displaystyle{\frac{\sigma}{4\omega}}\frac{\cos (\omega T/2)}{\sin (\omega T/2)}\\
\displaystyle{\frac{4\omega}{\sigma}}\frac{\cos (\omega T/2)}{\sin (\omega T/2)}&1
\end{pmatrix},
\]

\noindent
by direct calculation it is found that
\[
e^{tA}J^{-1}=J^{-1}e^{tA},\qquad e^{tA}e^{(T-s)A}=e^{TA}e^{(t-s)A}.
\]
Therefore
\begin{equation*}
\begin{split}
X(t)&=J^{-1}e^{TA}\int_{0}^{T}e^{(T-s)A}B(s,X)\,ds + \int_{0}^{t}e^{(t-s)A}B(s,X)\,ds\\
X(t)&=[J^{-1}e^{TA}+I_{2}]\int_{0}^{t}e^{(t-s)A}B(s,X)\,ds + J^{-1}e^{TA}\int_{t}^{T}e^{(t-s)A}B(s,X)\,ds
\end{split}
\end{equation*}

\noindent
The matrix $J$ satisfies $J^{-1}=J^{-1}e^{TA}+I_{2}$. In consequence
\[
X(t)=J^{-1}\int_{0}^{t}e^{(t-s)A}B(s,X)\,ds + J^{-1}e^{TA}\int_{t}^{T}e^{(t-s)A}B(s,X)\,ds
\]

\noindent
\vspace{0.5 cm}
We define the Green's matrix
\[
G(t,s)=\begin{cases}
\displaystyle{J^{-1}e^{(t-s)A}}; & \textit{if} \quad 0\leq s\leq t\leq T\\
\displaystyle{J^{-1}e^{T A}e^{(t-s)A}}; & \textit{if} \quad 0\leq t\leq s\leq T
\end{cases}
\]

Thus we can write
\[
X(t)=\int_{0}^{t}G(t,s)B(s,X)\,ds
\]

Let $G_{1}=J^{-1}e^{(t-s)A}$. Explicitly this matrix is given by
\begin{equation*}
G_{1}=\begin{pmatrix}
\frac{\cos \omega(t-s) \sin (\omega T/2)-\cos(\omega T/2)\sin \omega(t-s)}{2\sin (\omega T/2)}&-\frac{\sigma}{4\omega}[\frac{\sin \omega(t-s) \sin (\omega T/2)+\cos(\omega T/2)\cos \omega(t-s)}{2\sin (\omega T/2)}]\\
\frac{4\omega}{\sigma}[\frac{\cos \omega(t-s) \cos (\omega T/2)-\sin(\omega T/2)\sin \omega(t-s)}{2\sin (\omega T/2)}]&\frac{\cos \omega(t-s) \sin (\omega T/2)-\cos(\omega T/2)\sin \omega(t-s)}{2\sin (\omega T/2)}
\end{pmatrix}
\end{equation*}

From the trigonometric identities follows easily

\begin{equation*}
J^{-1}e^{(t-s)A}=\frac{1}{2}\begin{pmatrix}
\displaystyle{\frac{\sin \omega(T/2-(t-s))}{\sin(\omega T/2)}}&-\displaystyle{\frac{\sigma}{4\omega}\frac{\cos \omega(T/2-(t-s))}{\sin(\omega T/2)}} \\
\displaystyle{\frac{4\omega}{\sigma}\frac{\cos \omega(T/2-(t-s))}{\sin(\omega T/2)}}& \displaystyle{\frac{\sin \omega(T/2-(t-s))}{\sin(\omega T/2)}}
\end{pmatrix},
\end{equation*}

\vspace{0.5 cm}
Let $G_{2}=J^{-1}e^{T A}e^{(t-s)A}$. Explicitly this matrix is given by
\begin{equation*}
G_{2}=\begin{pmatrix}
-\frac{1}{2}&-\frac{\sigma}{4\omega}\frac{\cos(\omega T/2)}{2\sin (\omega T/2)}\\
\frac{4\omega}{\sigma}\frac{\cos(\omega T/2)}{2\sin (\omega T/2)}&-\frac{1}{2}
\end{pmatrix}\begin{pmatrix}
\cos \omega(t-s)&-\frac{\sigma}{4\omega}\sin \omega(t-s)\\
\frac{4\omega}{\sigma}\sin \omega(t-s)& \cos \omega(t-s)
\end{pmatrix}.
\end{equation*}
By direct calculation
\begin{equation*}
G_{2}=\begin{pmatrix}
\frac{-\cos \omega(t-s) \sin (\omega T/2)-\cos(\omega T/2)\sin \omega(t-s)}{2\sin (\omega T/2)}&-\frac{\sigma}{4\omega}[\frac{\cos(\omega T/2)\cos \omega(t-s)-\sin \omega(t-s)\sin (\omega T/2)}{2\sin (\omega T/2)}]\\
\frac{4\omega}{\sigma}[\frac{\cos(\omega T/2)\cos \omega(t-s)-\sin \omega(t-s)\sin (\omega T/2)}{2\sin (\omega T/2)}]&\frac{-\cos \omega(t-s) \sin (\omega T/2)-\cos(\omega T/2)\sin \omega(t-s)}{2\sin (\omega T/2)}
\end{pmatrix},
\end{equation*}
again, using basic trigonometric identities follows that
\begin{equation*}
J^{-1}e^{T A}e^{(t-s)A}=\frac{1}{2}\begin{pmatrix}
\displaystyle{\frac{\sin \omega((s-t)-T/2)}{\sin(\omega T/2)}}&-\displaystyle{\frac{\sigma}{4\omega}\frac{\cos \omega((s-t)-T/2)}{\sin(\omega T/2)}} \\
\displaystyle{\frac{4\omega}{\sigma}\frac{\cos \omega((s-t)-T/2)}{\sin(\omega T/2)}}& \displaystyle{\frac{\sin \omega((s-t)-T/2)}{\sin(\omega T/2)}}
\end{pmatrix}.
\end{equation*}

In orther to present a upper bound for the constant $\beta(\omega,\sigma)$ we give the calculations for
\[
\max_{t\in [0,T]}\int_{0}^{T}|G_{1,1}(t,s)|ds
\]
To this end, observe that
\[
\int_{0}^{T}|G_{1,1}(t,s)|ds =\frac{1}{2}\Big[\int_{0}^{t}\Big|\frac{\sin \omega(T/2-(t-s))}{\sin(\omega T/2)}\Big|\,ds + \int_{t}^{T}\Big|\frac{\sin \omega(T/2-(s-t))}{\sin(\omega T/2)}\Big|\,ds\Big].
\]

On the other hand
\[
\frac{1}{2}\int_{0}^{t}\Big|\frac{\sin \omega(T/2-(t-s))}{\sin(\omega T/2)}\Big|\,ds=\frac{1}{2\omega}\int_{\omega(T/2-t)}^{\omega T/2}\Big|\frac{\sin u}{\sin(\omega T/2)}\Big|\,du,
\]

and also
\[
\int_{t}^{T}\Big|\frac{\sin \omega(T/2-(s-t))}{\sin(\omega T/2)}\Big|\,ds=-\frac{1}{2\omega}\int_{\omega T/2}^{\omega(t-T/2)}\Big|\frac{\sin u}{\sin(\omega T/2)}\Big|\,du.
\]
It follows that
\[
\int_{0}^{T}|G_{1,1}(t,s)|ds =\frac{1}{\omega}\int_{0}^{\omega T/2}\frac{|\sin u|}{|\sin(\omega T/2)|}\,ds
\]
The same analysis applies to other components of the matrix $G$ and we obtain
\begin{equation*}
\begin{split}
\int_{0}^{T}|G_{1,2}(t,s)|ds &=\frac{\sigma}{4\omega^{2}}\int_{0}^{\omega T/2}\frac{|\cos u|}{|\sin(\omega T/2)|}\,ds,\\
\int_{0}^{T}|G_{2,1}(t,s)|ds &=\frac{4}{\sigma}\int_{0}^{\omega T/2}\frac{|\cos u|}{|\sin(\omega T/2)|}\,ds,\\
\int_{0}^{T}|G_{2,2}(t,s)|ds &=\frac{1}{\omega}\int_{0}^{\omega T/2}\frac{|\sin u|}{|\sin(\omega T/2)|}\,ds.
\end{split}
\end{equation*}

Since
\[
\beta=\beta(\mu,\sigma)=\displaystyle{\max_{t\in [0,T]}}\int_{0}^{T}|G(t,s)|ds =\max_{1\leq i,j\leq 2}\Big(\max_{t\in [0,T]}\int_{0}^{T}|G_{i,j}(t,s)|ds\Big),
\]
follows that
\[
\beta(\omega,\sigma)< \frac{T}{|\sin \omega T/2|} \max \left\{\frac{1}{2},\frac{2}{\sigma},\frac{\sigma}{8\omega}\right\}.
\]

\section*{Appendix 2}

In this appendix we are going to study the period function of periodic solutions around
the center $(1/2,\mu/\sigma)$ for the autonomous case. Given the clear symmetry of the
vector field of our Hamiltonian system
(\ref{hamilton}) around the center we consider
the next change of variables
\begin{equation}\label{cambio}
T:\R^{2}\rightarrow \R^{2}, \quad T(p,q)=\bigg(p-\frac{1}{2},q-\frac{\mu}{\sigma}\bigg).
\end{equation}
We obtain in the new coordinates $(\overline{p},\overline{q})$ the
quadratic Hamiltonian function
\begin{equation}
\mathcal{H}(\overline{p},\overline{q})=-\frac{\sigma}{2}\bigg(\overline{p}^2-\frac{1}{4}\bigg)\bigg(\overline{q}^2-\frac{\mu^2}{\sigma^2}\bigg).
\end{equation}
$\mathcal{H}(\overline{p},\overline{q})$ has the following symmetries
\[
\begin{split}
\text{S}_{1}&: (\overline{p},\overline{q})\rightarrow (\overline{p},-\overline{q}),\\
\text{S}_{2}&: (\overline{p},\overline{q})\rightarrow (-\overline{p},-\overline{q}),\\
\text{S}_{3}&: (\overline{p},\overline{q})\rightarrow (-\overline{p},\overline{q}).
\end{split}
\]
The corresponding dynamical system is
\begin{equation}\label{nps}
\left\{
\begin{aligned}
\dot{\overline{p}}&=-\frac{\sigma}{4}\overline{q}+\sigma \overline{q}\,\overline{p}^2\\
\dot{\overline{q}}&=\frac{\mu^2}{\sigma}\overline{p}-\sigma \overline{p}\,\overline{q}^2
\end{aligned}
\right.
\end{equation}
The invariant lines are now
\[
\left\{\overline{p}=-1/2 \right\}, \quad \left\{\overline{p}=1/2
\right\},\quad \left\{\overline{q}=-\mu/\sigma \right\} \quad
\text{and} \quad \left\{\overline{q}=\mu/ \sigma \right\}.
\]
They define a new quadrangular area $\mathcal{A}$ which center is
our equilibrium point $(0,0)$. See Fig.~\ref{regiona}.

\begin{figure}[h]
\begin{center}
\includegraphics[scale=0.5]{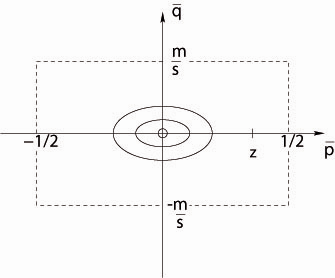}
\caption{Quadrangular region $\mathcal{A}$.}
\label{regiona}
\end{center}
\end{figure}

Consider a periodic solution $\overline{X}(t)=(\overline{p}(t),\overline{q}(t))$ of (\ref{nps}) inside $\mathcal{A}$ with initial conditions
\begin{equation}\label{ic}
\overline{p}(0)=\zeta, \quad \overline{q}(0)=0,
\end{equation}
this implies that $\zeta$ satisfies $-1/2<\zeta<1/2$. Since
$\mathcal{H}$ is a first integral of (\ref{nps}) we take the level
set
\[
h=\mathcal{H}(\zeta,0)=\frac{\mu^2}{2\sigma}(\zeta^2-\frac{1}{4}),
\]
and consider the equation
\[
\begin{split}
\mathcal{H}(\overline{p}(t),\overline{q}(t))&=h\\
-\frac{\sigma}{2}\big(\overline{p}^2(t)-\frac{1}{4}\big)\big(\overline{q}^2(t)-\frac{\mu^2}{\sigma^2}\big)&=\frac{\mu^2}{2\sigma}(\zeta^2-\frac{1}{4}),\\
\end{split}
\]
which can be written as
\[
-(4\overline{p}^2(t)-1)(\sigma^2 \overline{q}^2(t)-\mu^2)=\mu^2(4\zeta^2-1).
\]
By the symmetry $\text{S}_{2}$ we can take
$0<\overline{p}^2(t)<\zeta^2<1/4$  and get from direct calculus
\[
\sigma^2\overline{q}^2(t)=\frac{4\mu^2(\overline{p}^2(t)-\zeta^2)}{4\overline{p}^2(t)-1}.
\]
From (\ref{nps}) it follows that
\[
\overline{q}=\frac{\dot{\overline{p}}}{\sigma(\overline{p}^2-\frac{1}{4})},
\]
and we finally have
\[
\dot{\overline{p}}^2(t)= \frac{1}{4}\mu^2(\overline{p}^{2}(t)-\zeta^2)(4\overline{p}^{2}(t)-1).
\]
Note that the right hand side in the last equation is positive.
Let $T(\zeta)$ be the period of the solution $\overline{X}(t)$ of
(\ref{nps}) satisfying  (\ref{ic}).  We have

\[
\begin{split}
\int_{0}^{T(\zeta)/4}dt&=\int_{0}^{T(\zeta)/4}\frac{-1}{\sqrt{\frac{1}{4}\mu^2(\overline{p}^{2}-\zeta^2)(4\overline{p}^{2}-1)}}\frac{d\overline{p}}{dt}\,dt,\\
T(\zeta)&=\frac{8}{\mu}\int_{0}^{\zeta}\frac{1}{\sqrt{(\overline{p}^{2}-\zeta^2)(4\overline{p}^{2}-1)}}d\overline{p}.
\end{split}
\]
If $\overline{p}=\zeta v$,  then
\[
T(\zeta)=\frac{8}{\mu}K(2\zeta),
\]
where
$\displaystyle{K(x)=\int_{0}^{1}\frac{1}{\sqrt{(1-v^2)(1-x^{2}v^2)}}dv}$
is the complete elliptic integral of the  first kind. On the other
hand the linearized problem of (\ref{nps}) about the equilibrium
solution $(0,0)$ is given by
\begin{equation*}
\left\{
\begin{aligned}
\dot{y}_{1}&=-\frac{\sigma}{4}\,y_{2}\\
\dot{y}_{2}&=\frac{\mu^2}{\sigma}\,y_{1}
\end{aligned}
\right.
\end{equation*}

and its general solution is
\[
\begin{split}
y_{1}&=-C_{1}\frac{\sigma}{2 \mu}\sin \Big(\frac{\mu}{2}t\Big)+C_{2}\cos \Big(\frac{\mu}{2}t\Big),\\
y_{2}&=C_{1}\cos\Big(\frac{\mu}{2}t\Big)+\frac{2\mu}{\sigma}C_{2}\sin \Big(\frac{\mu}{2}t\Big),
\end{split}
\]
This shows that the period of the linear problem is $\displaystyle{T_{lp}=\frac{4\pi}{\mu}}.$

From the above discussions we are now in position to prove the next statement.

\begin{Prop} Let $\overline{X}(t;\zeta)=(\overline{p}(t;\zeta),\overline{q}(t;\zeta))$ be a non trivial periodic solution of (\ref{nps}) that satisfies the initial conditions (\ref{ic}). If $T(\zeta)$ is the period function of $\overline{X}(t;\zeta)$ this function satisfies the following properties
\begin{description}
    \item [(a)] $\displaystyle\lim_{\zeta\to 0^{+}}T(\zeta)=T_{lp}.$
    \item [(b)] $\displaystyle\lim_{\zeta\rightarrow \frac{1}{2}^{-}}T(\zeta)=\infty.$
    \item [(c)] $\displaystyle{\frac{dT}{d\zeta}>0}$ for all $0<\zeta<\frac{1}{2}.$
\end{description}
\end{Prop}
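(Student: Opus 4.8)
The plan is to exploit the closed-form expression $T(\zeta)=\tfrac{8}{\mu}K(2\zeta)$ derived above, where $K(x)=\int_{0}^{1}\big[(1-v^2)(1-x^2v^2)\big]^{-1/2}dv$ is the complete elliptic integral of the first kind. Once this formula is in hand, all three assertions reduce to elementary properties of $K$ on the interval $x\in(0,1)$, so the strategy is: (i) re-derive or simply quote $T(\zeta)=\tfrac{8}{\mu}K(2\zeta)$ from the computation preceding the statement; (ii) read off (a) and (b) from the boundary behavior of $K$; (iii) prove (c) by differentiating under the integral sign.

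For part (a): as $\zeta\to 0^{+}$ we have $2\zeta\to 0$, and $K(0)=\int_{0}^{1}(1-v^2)^{-1/2}dv=\pi/2$. Hence $\lim_{\zeta\to 0^{+}}T(\zeta)=\tfrac{8}{\mu}\cdot\tfrac{\pi}{2}=\tfrac{4\pi}{\mu}=T_{lp}$, which matches the period of the linearized system computed above; one may invoke dominated convergence to justify passing the limit inside the integral, the dominating function being $(1-v^2)^{-1/2}$, which is integrable on $[0,1]$. For part (b): as $\zeta\to\tfrac12^{-}$ we have $2\zeta\to 1^{-}$, and it is classical that $K(x)\to\infty$ as $x\to 1^{-}$ (the integrand near $v=1$ behaves like $[(1-v)(1-x)+(1-v^2)(\text{bounded})]^{-1/2}$, and at $x=1$ the integral $\int^{1}(1-v^2)^{-1}dv$ diverges logarithmically). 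A clean way to present this: for $x<1$, $K(x)\ge \int_{0}^{1}(1-x^2v^2)^{-1}\,dv \cdot c$ for a suitable constant, or more simply monotone convergence gives $K(x)\uparrow\int_{0}^{1}(1-v^2)^{-1}dv=+\infty$ as $x\uparrow 1$. Therefore $T(\zeta)\to\infty$.

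For part (c): since $\zeta\mapsto 2\zeta$ is increasing, it suffices to show $K'(x)>0$ for $x\in(0,1)$. Differentiating under the integral sign, $K'(x)=\int_{0}^{1} x v^2\,(1-v^2)^{-1/2}(1-x^2v^2)^{-3/2}\,dv$, and every factor in the integrand is strictly positive on $(0,1)$ for $x\in(0,1)$, so $K'(x)>0$; differentiation under the integral is legitimate because the $x$-derivative of the integrand is dominated, locally uniformly in $x$ on compact subsets of $(0,1)$, by an $L^1$ function of $v$. Then $\tfrac{dT}{d\zeta}=\tfrac{16}{\mu}K'(2\zeta)>0$ for $0<\zeta<\tfrac12$.

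The only genuine subtlety is part (b), where the limit is taken toward the boundary $x=1$ at which the elliptic integral develops a non-integrable singularity; care is needed to justify the divergence rather than merely asserting it, and the cleanest route is the monotone convergence argument: the integrands increase pointwise to $(1-v^2)^{-1}$ as $x\uparrow 1$, whose integral over $[0,1]$ is $+\infty$. Parts (a) and (c) are routine once the elliptic-integral representation is accepted. Thus the essential content of the proposition is entirely encapsulated in the formula $T(\zeta)=\tfrac{8}{\mu}K(2\zeta)$ together with the standard facts $K(0^{+})=\pi/2$, $K(1^{-})=+\infty$, and $K'>0$ on $(0,1)$.
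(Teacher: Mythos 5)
Your proof is correct and follows essentially the same route as the paper: both reduce everything to the elliptic-integral representation $T(\zeta)=\tfrac{8}{\mu}K(2\zeta)$, obtain (a) and (b) from the boundary behavior of $K$, and prove (c) by differentiation under the integral sign (the paper works in the trigonometric form $K(2\zeta)=\int_0^{\pi/2}(1-4\zeta^2\sin^2\theta)^{-1/2}\,d\theta$, which is your integral after the substitution $v=\sin\theta$). Your treatment of (a) and (b) via dominated and monotone convergence is in fact slightly more explicit than the paper's appeal to ``well-known properties'' of $K$.
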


\vspace{0.5 cm}
\noindent
\textbf{Proof.} The proof of $(a)$ and $(b)$ follows from the well know properties of the complete elliptic integral of the first kind function. Moreover
\[
K(2\xi)=\int_{0}^{1}\frac{1}{\sqrt{(1-v^2)(1-4\xi^{2}v^2)}}dv=\int_{0}^{\pi/2}\frac{1}{\sqrt{1-4\xi^2\sin^2\theta}}d\theta.
\]
To prove (c) consider the real function
\[
f(\theta,\zeta)=\frac{1}{\sqrt{1-4\xi^2\sin^2\theta}},
\]

with $f:[0,\pi/2]\times [0,1/2[\to \R$.  This function satisfies
\begin{itemize}
    \item $f(\cdotp,\zeta)$ is a Riemann integrable function in $[0,\pi/2]$, for all $\zeta \in
    [0,1/2[$,
    \item $f(\theta,\cdotp)$ is a differentiable function in $[0,1/2[$, for all $\theta \in [0,\pi/2]$,
\end{itemize}
furthermore $\displaystyle{\frac{\partial f}{\partial \zeta}}$ is
continuous in $[0,\pi/2]\times [0,1/2[$. The previous observations
over the function $f$ are the hypothesis of the classic version of
the rule of derivation under the integral, then
\[
\frac{dT}{d\zeta}=\frac{4\zeta}{\mu}\int_{0}^{\pi/2}\frac{\sin^{2}\theta}{(1-4\zeta^2\sin^2\theta)^{3/2}}d\theta >0,
\]
and this proves (c). $\square$

\vspace{0.5 cm}
\noindent
\textbf{Observation.} By symmetry of the system (\ref{nps}) , the above proposition is also true if we consider  non trivial solutions $\overline{X}(t;\eta)=(\overline{p}(t;\eta),\overline{q}(t;\eta))$ that satisfies the initial conditions
\begin{equation*}
\overline{p}(0)=0, \quad \overline{q}(0)=\eta,
\end{equation*}

\noindent
with $0<\eta <\mu/\sigma$.

\end{document}